\newcommand{\N}{\mathbb N} %Naturals
\newcommand{\R}{\mathbb R} %Reals
\newcommand{\simplex}{\mathbb S} %simplex 
\newcommand{\PS}{\mathcal{P}} 
\newcommand{\QQ}{\mathcal{Q}}
\newcommand{\I}{\mathcal{I}}
\newcommand{\T}{\mathcal{T}}
\newcommand{\blambda}{\boldsymbol{\lambda}}
\newcommand{\bnu}{\boldsymbol{\nu}}
\newcommand{\bq}{\boldsymbol{q}}
\newcommand{\bn}{\boldsymbol{n}}
\newcommand{\bx}{\boldsymbol{x}}
\newcommand{\bh}{\boldsymbol{h}}
\newcommand{\bg}{\boldsymbol{g}}
\newcommand{\bpi}{\boldsymbol{\pi}}
\newcommand{\bmm}{\boldsymbol{m}}
\newcommand{\by}{\boldsymbol{y}}
\newcommand{\bv}{\boldsymbol{v}}
\newcommand{\bone}{\boldsymbol{1}}
\newtheorem{lemma}{Lemma}
\newtheorem{defi}{Definition}
\newtheorem{theorem}{Theorem}
\newtheorem{case}{Case}
\title{The Central Role of the Identifying Assumption in Population Size Estimation}
\author{Serge Aleshin-Guendel$^{1}$,  Mauricio Sadinle$^{1}$, and Jon Wakefield$^{1, 2}$ \\ $^{1}$Department of Biostatistics, University of Washington, Seattle, Washington, U.S.A. \\ $^{2}$Department of Statistics, University of Washington, Seattle, Washington, U.S.A.}
\date{}
\begin{document}

\maketitle

\begin{abstract}
 The problem of estimating the size of a population based on a subset of individuals observed across multiple data sources is often referred to as capture-recapture or multiple-systems estimation. This is fundamentally a missing data problem, where the number of unobserved individuals represents the missing data. As with any missing data problem, multiple-systems estimation requires users to make an untestable identifying assumption in order to estimate the population size from the observed data. If an appropriate identifying assumption cannot be found for a data set, no estimate of the population size should be produced based on that data set, as models with different identifying assumptions can produce arbitrarily different population size estimates---even with identical observed data fits.  Approaches to multiple-systems estimation often do not explicitly specify identifying assumptions. This makes it difficult to decouple the specification of the model for the observed data from the identifying assumption and to provide justification for the identifying assumption.
 We present a re-framing of the multiple-systems estimation problem that leads to an approach which decouples the specification of the observed-data model from the identifying assumption, and discuss how common models fit into this framing. This approach takes advantage of existing software and facilitates various sensitivity analyses.
 We demonstrate our approach in a case study estimating the number of civilian casualties in the Kosovo war.
 Code used to produce  this manuscript is available at \href{https://github.com/aleshing/central-role-of-identifying-assumptions}{\texttt{github.com/aleshing/central-role-of-identifying-assumptions}}.
 \\
 Keywords: Capture-recapture; Missing data; Multiple-systems estimation; Sensitivity analysis.
\end{abstract}

%  If you are using the referee option, a new page, numbered page 1, will
%  start after the summary and keywords.  The page numbers thus count the
%  number of pages of your manuscript in the preferred submission style.
%  Remember, ``Normally, regular papers exceeding 25 pages and Reader Reaction 
%  papers exceeding 12 pages in (the preferred style) will be returned to 
%  the authors without review. The page limit includes acknowledgements, 
%  references, and appendices, but not tables and figures. The page count does 
%  not include the title page and abstract. A maximum of six (6) tables or 
%  figures combined is often required.''

%  You may now place the substance of your manuscript here.  Please use
%  the \section, \subsection, etc commands as described in the user guide.
%  Please use \label and \ref commands to cross-reference sections, equations,
%  tables, figures, etc.
%
%  Please DO NOT attempt to reformat the style of equation numbering!
%  For that matter, please do not attempt to redefine anything!

\section{Introduction}
% Each paragraph serves a purpose, which should be summarized in ~1 sentence
% Introduction to MSE. Mention different applications as broad usage is justification of the importance of our research.
Estimating the size of a closed population is a common problem in many fields, including ecology \citep{Otis_1978}, epidemiology \citep{Hook_1995a}, official statistics \citep{ Anderson_1999}, and human rights \citep{Ball_2002}. The available data typically take the form of multiple lists which record information on a subset of individuals in a population. When there exists a mechanism to identify which individuals are the same across lists, multiple-systems estimation (MSE), also known as capture-recapture, provides an approach to estimating the population size based on the overlap of the lists \citep{Bird_2018}.

% Decompose the problem at hand, i.e. missing data problem, need to make some assumption to perform inference
% Discuss traditional approaches and their problems, i.e. the conflation
MSE is at its heart a missing data problem, as we do not observe all individuals in the population of interest  \citep[see e.g.][]{Fienberg_2009, Manrique-Vallier_2016}. As in any missing data problem, MSE requires users to make an untestable identifying assumption about how the observed individuals relate to the unobserved individuals in order to estimate the population size from the observed data. In practice, this means that models with different identifying assumptions can produce arbitrarily different population size estimates, even when the models have identical fits to the observed data. Thus, any identifying assumptions used in an analysis need to be appropriately justified based on the context of the data. If an appropriate identifying assumption can not be found for a data set, no estimate of the population size should be produced based on that data set.

We believe that the central role of specifying the identifying assumption is not sufficiently appreciated, as it is usually conflated with model specification, which involves both making an identifying assumption \textit{and} specifying a model for the observed data. See for example \cite{Fienberg_1972} who wrote ``... we are assuming that the model which describes the observed data also describes the count of the unobserved individuals. We have no way of checking this assumption," and \cite{Manrique-Vallier_2013} who wrote ``The arguably most basic assumption in MSE is that the noninclusion of the fully unobserved individuals ... can be represented by the same model that represents the inclusion (and noninclusion) of those we can observe in at least one list. This is a strong and untestable condition." 

This conflation of identifying assumption specification and model specification has led practitioners to perform model evaluation by comparing a suite of model fits that are the results of both fundamentally different identifying assumptions and different model specifications for the observed data \citep[see e.g.][]{Sadinle_2018, Manrique-Vallier_2019b, Silverman_2020}. This makes it essentially impossible to disentangle whether differences in inferences are due to differences in identifying assumptions, model specifications for the observed data, or some combination. More importantly, it is rare in these instances for practitioners to provide justification for any of the identifying assumptions being used.

% What we do in this article
% In this article, we propose a Bayesian approach to MSE that places the identifying assumption front and center in the MSE workflow.
In this article, we propose an approach for MSE that places the identifying assumption front and center in the MSE workflow. We first revisit the framing of MSE as a missing data problem and describe our approach in Section \ref{sec:msemiss}. Section \ref{sec:common_models} reviews two common MSE models\textemdash log-linear and latent class models\textemdash through our missing data framing. In Section \ref{sec:revisit} we focus on the identifying assumption associated with log-linear models, and describe how it can be used as a building block for alternative identifying assumptions and sensitivity analyses that examine the impact of the identifying assumption. %\toremove{In Section \ref{sec:bayes} we present an approach to computation in the Bayesian setting that allows us to use any identifying assumption in conjunction with any prior specification by taking advantage of existing software. This approach facilitates sensitivity analyses that examine the impact of both the prior and the identifying assumption.} 
Finally, in Section \ref{sec:kosovo} we illustrate our approach in a case study of estimating the number of civilian casualties in the Kosovo war.

\section{Multiple-Systems Estimation as a Missing Data Problem}
\label{sec:msemiss}
\subsection{The Data}
Suppose we have a closed population of $N$ individuals, of which $n<N$ are observed by one or more of $K$ lists. Let $H=\{0, 1\}^K$ denote the possible patterns of inclusion of the individuals in the lists, $H^*=H\setminus \{0\}^K$ denote the possible subsets of lists in which each of the $n$ observed individuals could have been observed, and let $\bx_i\in H$ denote the subset of lists in which individual $i$ was included. For example, with $K=3$, $\bx_i=(0, 1, 1)$ indicates that individual $i$ was observed in lists $2$ and $3$, but not list $1$. 
	
These data for the $N$ individuals can be gathered into a $2^K$ contingency table of list overlap, where the cells of the table are indexed by $\bh\in H$, with counts $n_{\bh}=\sum_{i=1}^NI(\bx_i=\bh)$. We do not observe the count for cell $\{0\}^K$, $n_0:= n_{(0,\cdots,0)} = N-n$, which records the number of individuals missing from all lists, so the observed contingency table is incomplete. Let $\bn =\{n_{\bh}\}_{{\bh}\in H^*}$ denote the counts of the incomplete contingency table. The unobserved cell count $n_0$, or equivalently the population size $N$, is the target of inference.

\subsection{The Complete-Data Distribution}
\label{sec:comp_data_dist}
Under independent and identically distributed (i.i.d.) sampling of individuals by the lists, the $2^K$ contingency table of counts is multinomially distributed, i.e.
\begin{equation}
\label{eq:working_model}
    \bn, n_0 \mid N, \bpi \sim \textsc{Multinomial}(N, \bpi),
\end{equation}
%where $\bpi=E_{F}[\bpi^i]$. 
where $\bpi=\{\pi_{\bh}\}_{{\bh}\in H} \in \simplex^{2^K-1}$ is a set of cell probabilities, and $\simplex^{d}=\{(a_1,\cdots, a_{d+1})\in\R^{d+1}\mid \sum_{i=1}^{d+1}a_i=1, a_i> 0 \ \forall i\}$ denotes the $d$-dimensional probability simplex. {We note that this multinomial model, introduced as early as \cite{Darroch_1958}, is a possible simplification of reality, as it does not allow for correlation of individuals' inclusion patterns.} We will refer to the model in \eqref{eq:working_model} as the \textit{complete-data distribution}, for which the evaluation relies on knowing the complete $2^K$ contingency table of counts. In general, the parameter space for this model will be some subset of $\Theta=\{N, \bpi \mid N\in\N, \bpi\in \mathbb{S}^{2^K-1}\}$, which we will refer to as the \textit{complete-data parameterization}. As shown in Web Appendix A, when individuals are not i.i.d. sampled, but are sampled independently with cell probabilities drawn i.i.d. from some mixing distribution on $\mathbb{S}^{2^K-1}$, we also arrive at the model in \eqref{eq:working_model}. This is the case for common models for heterogeneity such as the $M_h$ and $M_{th}$ models of \cite{Otis_1978}. {Because common models for heterogeneity reduce to \eqref{eq:working_model}, in the rest of this article we will view the cell probabilities as being marginal of any possible heterogeneity mechanisms.}
	
\subsection{Decomposing the Complete-Data Distribution}

% \mcio{Here's something more straightforward, I think we can justify things as follows: we need to work with the likelihood obtained from integrating out $n_0$, so (don't think about $N$): $$ p(\bn \mid \bpi)=\sum_{n_0=0}^{\infty} (\sum_{\bh\in H} n_{\bh})!\prod_{\bh\in H}\frac{\pi_{\bh}^{n_{\bh}}}{n_{\bh}!}=\frac{(\sum_{\bh\in H^*} n_{\bh})!}{(\sum_{\bh\in H^*} \pi_{\bh})^{\sum_{\bh\in H^*} n_{\bh}}}\prod_{\bh\in H^*}\frac{\pi_{\bh}^{n_{\bh}}}{n_{\bh}!} $$ which is your $L_2$ part below.  This doesn't involve $n_0$, therefore $n_0$ is obviously not identifiable from the obs data distribution}
% \serge{Mauricio, this doesn't work out as you've written. $\sum_{n_0=0}^{\infty}  n_0 + n)!\frac{\pi_{0}^{n_{0}}}{n_{0}!}=n!/(1-\pi_{0})^{n+1}$, so there's an extra $(1-\pi_{0})^{-1}$ term. This is because ``integrating out" $n_0$ is the same as marginalizing out $n_0$ when using a flat prior on $N$ (see Section 5.2 and \cite{Link_2013}). To recover $L_2$, you need to use the prior $p(N)\propto 1/N$ (as I discuss in Section 5.2). This all gets back to the point I've discussed before that this technically isn't a missing data problem as normally talked about in the stats literature. Also of course the likelihood doesn't involve $n_0$ after integrating it out, you integrated it out after all!}

It is instructive to decompose the complete-data distribution as
\begin{equation}\label{eq:decomp}
    p(\bn, n_0\mid  N, \bpi)= N!\prod_{\bh\in H}\frac{\pi_{\bh}^{n_{\bh}}}{n_{\bh}!}= L_1(N, \pi_0\mid n) L_2(\tilde{\bpi}\mid \bn),
\end{equation}
with $L_1(N, \pi_0\mid n)=\binom{N}{n} \pi_0^{N-n} (1-\pi_0)^{n}$ 
and $L_2(\tilde{\bpi}\mid \bn)= n!\prod_{\bh\in H^*}\tilde{\pi}_{\bh}^{n_{\bh}}/n_{\bh}!$, where $\pi_0:=\pi_{(0,\cdots,0)}=1-\sum_{\bh\in H^*}\pi_{\bh}$ is the probability of being missing from every list, and $\tilde{\pi}_{\bh}=\frac{\pi_{\bh}}{1-\pi_0}=\frac{\pi_{\bh}}{\sum_{\bh'\in H^*}\pi_{\bh'}}$ is the probability of being observed in the subset of the lists $\bh$ conditional on being observed in at least one list. 
%\begin{equation}
%\label{eq:L1}
    %L_1(N, \pi_0|n)={N\choose n} \pi_0^{N-n} (1-\pi_0)^{n},
%\end{equation}
%and
%\begin{equation}
%\label{eq:L2}
    %L_2(\tilde{\bpi}|n, \bn)= n!\prod_{\bh\in H^*}\frac{\tilde{\pi}_{\bh}^{n_{\bh}}}{n_{\bh}!}.
%\end{equation}
$L_1$ is a binomial likelihood for $n$, which has been well studied in the related binomial $N$ problem literature \citep[see e.g. ][]{Rukhin_1975}. $L_2$ is a multinomial likelihood for the observed data $\bn$ conditional on their sum $n$, referred to as the \textit{conditional likelihood} \citep{Fienberg_1972}. We will refer to $\pi_0$ as the \textit{unobserved cell probability} and to $\tilde{\bpi}$ as the \textit{observed cell probabilities}. This decomposition hints at an alternative to the complete-data parameterization $\Theta$, $\Theta^*=\{N, \pi_0, \tilde{\bpi}\mid N\in\N, \pi_0\in(0,1), \tilde{\bpi}\in \mathbb{S}^{2^K-2}\}$, which we will refer to as the \textit{observed-data parameterization}. The two parameterizations are equivalent, so we will work with whichever is more convenient for exposition.

\subsection{Identifiability} 
\label{sec:ident}
Before performing inference in a statistical model, it is important to check that the model is identifiable. For $\theta\in\Theta^*$, let $P_\theta$ denote the complete-data distribution at the set of parameters $\theta$. Consider the following standard definition of identifiability:
\begin{defi}
	The statistical model $\PS_{\Omega}=\{P_{\theta}\mid  \theta\in\Omega\subset\Theta^*\}$ is \textbf{identifiable} if $\ \forall \theta_1,\theta_2\in\Omega$, $P_{\theta_1}=P_{\theta_2}$ implies that $\theta_1=\theta_2$. Equivalently, $\PS_{\Omega}$ is identifiable if $\ \forall\theta_1=\{N, \pi_0, \tilde{\bpi}\},\theta_2=\{N', \pi_0', \tilde{\bpi}'\}\in\Omega$, $L_1(N, \pi_0\mid n)L_2(\tilde{\bpi}\mid \bn)=L_1(N', \pi_0'\mid n)L_2(\tilde{\bpi}'\mid \bn)$ $\forall\bn$ implies that $\theta_1=\theta_2$.
\end{defi}
One can show that the unrestricted model $\PS_{\Theta^*}$ is identifiable. Since the goal is to estimate $N$, sufficiency might lead one to try to estimate $N$ and $\pi_0$ in the unrestricted model based solely on the binomial likelihood for $n$. Examining the likelihood surface for a given $n$, one finds a maximum at $N=n$ and $\pi_0=0$, with a ridge centered along the set $\{N\in\N, \pi_0\in(0,1)\mid N(1-\pi_0)\approx n\}$ that monotonically decreases as $N$ increases. In Figure \ref{fig:L1surface} we plot this surface when $n=100$. There is a fundamental problem in that two parameters are being estimated with one data point, which makes it impossible to construct an unbiased or consistent estimator of either $N$ or $\pi_0$ \citep{DasGupta_2005, Farcomeni_2012}. Thus the standard definition of identifiability is misleading in this setting, as it does not necessarily imply that the parameters are estimable in any traditional sense.% A similar problem arises when analyzing ecological data (see e.g. \cite{Wakefield_2004}).

\begin{figure}[h]
\centering
\includegraphics[width=0.75\linewidth]{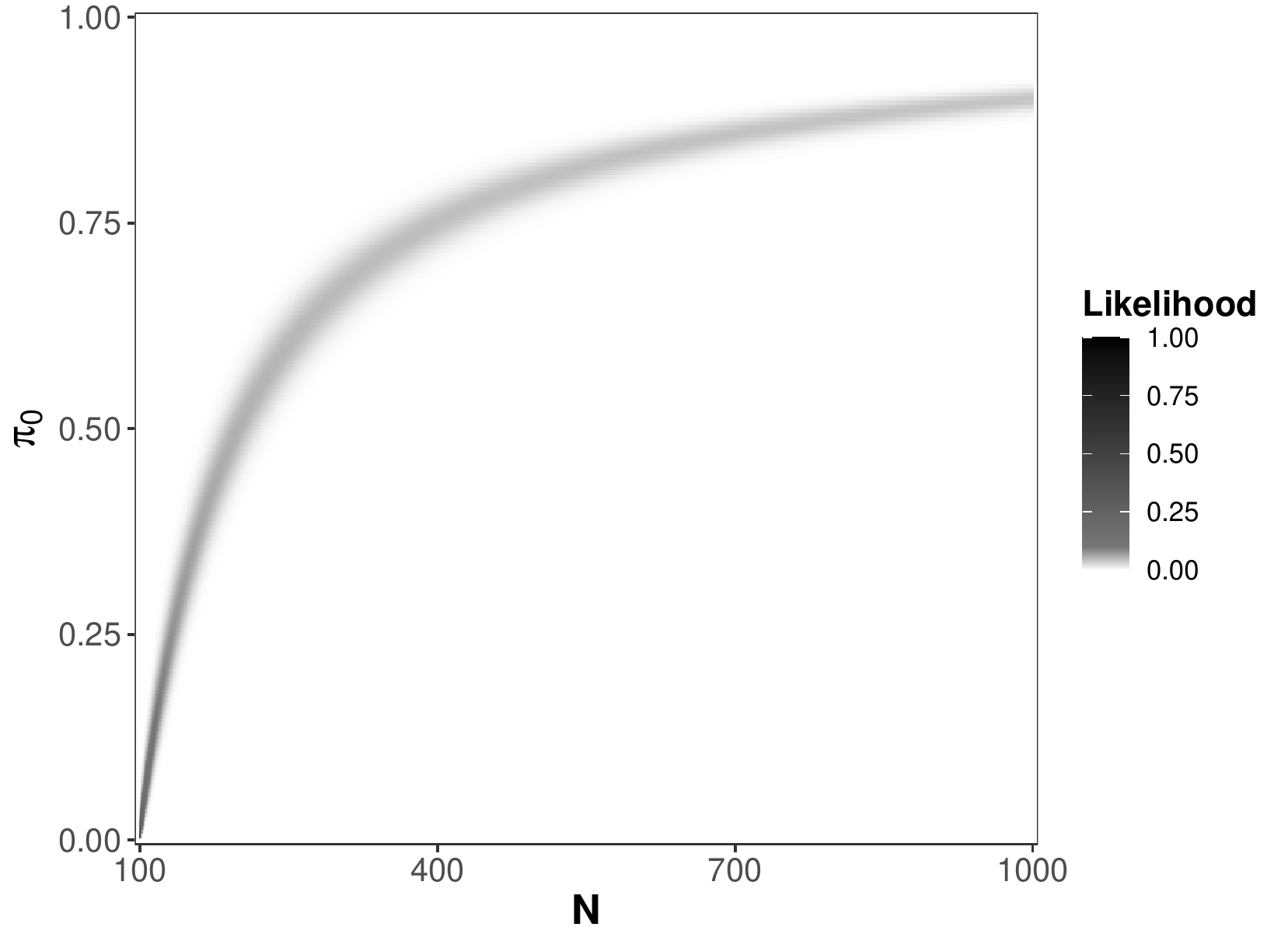}
\caption{Likelihood surface of $L_1$ when $n=100$.}
\label{fig:L1surface}
\end{figure}

We will instead use the following alternative definition of identifiability specific to MSE \citep{Link_2003, Holzmann_2006}:
\begin{defi}
    \label{def:cond_indet}
	The statistical model $\PS_{\Omega}$ is \textbf{conditionally identifiable} if $\ \forall\theta_1=\{N, \pi_0, \tilde{\bpi}\},$ $\theta_2=\{N', \pi_0', \tilde{\bpi}'\}\in\Omega$, $L_2(\tilde{\bpi}\mid \bn)=L_2(\tilde{\bpi}'\mid \bn)$ $\forall\bn$ implies that $ \pi_0= \pi_0'$. 
\end{defi}
%In a conditionally identifiable model, the conditional likelihood, $L_2$, identifies the unobserved cell probability, $\pi_0$. Clearly the unrestricted model $\PS_{\Theta^*}$ is not conditionally identifiable. Standard identifiability of the multinomial conditional likelihood tells us that the condition $L_2(\tilde{\bpi}\mid \bn, n)=L_2(\tilde{\bpi}'\mid \bn, n)$ $\forall\bn$ in Definition \ref{def:cond_indet} implies that $\tilde{\bpi}=\tilde{\bpi}'$. Thus conditionally identifiable models take the form $\PS_{\Omega}$, where $\Omega=\{N, \pi_0, \tilde{\bpi}\mid N\in \N, \pi_0=\T(\tilde{\bpi}), \tilde{\bpi}\in \tilde{S} %\subset\mathbb{S}^{2^K-2} \}$. Here $\T: \tilde{T}\to(0,1)$ is a function that maps a given set of observed cell probabilities $\tilde{\bpi}$ to an unobserved cell probability $\pi_0$, $\tilde{T}$ is the domain of the function $\T$, and $\tilde{S}\subset\tilde{T}\subset \mathbb{S}^{2^K-2}$ is the set of possible values for the observed cell probabilities $\tilde{\bpi}$. 
In a conditionally identifiable model, the conditional likelihood, $L_2$, identifies the unobserved cell probability, $\pi_0$. Clearly the unrestricted model $\PS_{\Theta^*}$ is not conditionally identifiable. 
%Standard identifiability of the multinomial conditional likelihood tells us that the condition $L_2(\tilde{\bpi}\mid \bn, n)=L_2(\tilde{\bpi}'\mid \bn, n)$ $\forall\bn$ in Definition \ref{def:cond_indet} implies that $\tilde{\bpi}=\tilde{\bpi}'$. Thus we can equivalently state Definition \ref{def:cond_indet} as follows: the statistical model $\PS_{\Omega}=\{P_{\theta}\mid  \theta\in\Omega\subset\Theta^*\}$ is conditionally identifiable if $\ \forall\theta_1=\{N, \pi_0, \tilde{\bpi}\},\theta_2=\{N', \pi_0', \tilde{\bpi}'\}\in\Omega$, $\tilde{\bpi}=\tilde{\bpi}'$ implies that $ \pi_0= \pi_0'$. 
Standard identifiability of the multinomial conditional likelihood tells us that we can equivalently state Definition \ref{def:cond_indet} as follows: the statistical model $\PS_{\Omega}$ is conditionally identifiable if $\ \forall\theta_1=\{N, \pi_0, \tilde{\bpi}\},\theta_2=\{N', \pi_0', \tilde{\bpi}'\}\in\Omega$, $\tilde{\bpi}=\tilde{\bpi}'$ implies that $ \pi_0= \pi_0'$.
Thus for a conditionally identifiable model, there exists a function $\T\colon \tilde{T}\to(0,1)$ that maps observed cell probabilities, $\tilde{\bpi}$, to unobserved cell probabilities, $\pi_0$, where $\tilde{T}\subset\mathbb{S}^{2^K-2}$. When the domain $\tilde{T}$ of this function is not equal to $\mathbb{S}^{2^K-2}$, this restricts the set of possible values for $\tilde{\bpi}$ in the model to $\tilde{T}$. Any extra assumptions in the model involving $\tilde{\bpi}$ can then further restrict the set of possible values for $\tilde{\bpi}$ in the model to a set $\tilde{S}\subset\tilde{T}$.
%\mcio{till here it's more clear why tilde T appears, but then tilde S seems to appear without explanation} 
Thus conditionally identifiable models take the form $\PS_{\Omega}$, where $\Omega=\{N, \pi_0, \tilde{\bpi}\mid N\in \N, \pi_0=\T(\tilde{\bpi}), \tilde{\bpi}\in \tilde{S}
\}$.% and $\tilde{S}\subset \mathbb{S}^{2^K-2}$ is the set of possible values for the observed cell probabilities $\tilde{\bpi}$ in the model.

When a model is not conditionally identifiable, we have no guarantees for when the parameters are estimable in any traditional sense. In particular, non-identifiability precludes consistent estimation as ``there will be uncertainty in parameter estimates that is not washed out as more data are collected" \citep{Linero_2017}. If a model $\PS_{\Omega}$ is conditionally identifiable, all parameters of the model can be consistently estimated \citep{Sanathanan_1972}. However, we emphasize that the data needs to have been generated by a distribution %$P_{\theta}$ 
in the model $\PS_{\Omega}$ for the parameters to be consistently estimable.  In other words, in order to estimate the population size $N$, we need to assume a functional relationship, $\T$, between the observed cell probabilities $\tilde{\bpi}$ and the unobserved cell probability $\pi_0$. This is the main idea behind MSE.
	
\subsection{Missing Data} \label{sec:inf_miss}
The framing in the previous section is motivated by our treatment of MSE as a missing data problem. The decomposition in \eqref{eq:decomp} is related to the decomposition in the missing data literature of the complete-data distribution into the \textit{extrapolation distribution} and the \textit{observed-data distribution} \citep{Hogan_2008}. The extrapolation distribution captures how to extrapolate to the missing data given the observed data, which in our context corresponds to $L_1$. The observed-data distribution, as the name indicates, is the distribution of the observed data, which in this context corresponds to $L_2$. Following the analogy of the missing data literature, by restricting ourselves to models of the form $\PS_{\Omega}$, where $\Omega=\{N, \pi_0, \tilde{\bpi}\mid N\in \N, \pi_0=\T(\tilde{\bpi}), \tilde{\bpi}\in \tilde{S}\}$, one is making an \textit{identifying assumption}, $\T$,  about how $\tilde{\bpi}$ relates to $\pi_0$ in order to identify $\pi_0$.

%Based on standard properties of the multinomial conditional likelihood, when $\tilde{S}\neq \mathbb{S}^{2^K-2}$, the restriction on the observed-data distribution that $\tilde{\bpi}\in \tilde{S}$ is testable from the data. This restriction could be due to the domain of the identifying assumption $\T$ (see Section \ref{sec:marg_ident} for an example), or due to extra modeling assumptions for the observed cell probabilities, $\tilde{\bpi}$ (see Section \ref{sec:loglinear} for an example). When the restriction is due to the domain of $\T$, this implies that the assumption that $\pi_0=\T(\tilde{\bpi})$ is falsifiable from the data, as the restriction on the observed-data distribution is testable from the data. However, this assumption itself is \textit{not} testable. 
The observed-data distribution is restricted when the set of possible values for the observed cell probabilities, $\tilde{S}$, is not equal to $\mathbb{S}^{2^K-2}$. Based on standard properties of the multinomial conditional likelihood, restrictions on the observed-data distribution  are assumptions that are testable from the data.
As noted in the previous section, these restrictions could be due to the domain, $\tilde{T}$, of the identifying assumption (see Section \ref{sec:marg_ident} for an example), or due to extra modeling assumptions for the observed cell probabilities, $\tilde{\bpi}$ (see Section \ref{sec:loglinear} for an example). This motivates the following definition \citep[see Chapter 8 of][]{Hogan_2008}:
\begin{defi}
    A model $\PS_{\Omega}$, where $\Omega=\{N, \pi_0, \tilde{\bpi}\mid N\in \N, \pi_0=\T(\tilde{\bpi}), \tilde{\bpi}\in \tilde{S}\}$, is \textit{nonparametric identified} when $\tilde{S}=\tilde{T}= \mathbb{S}^{2^K-2}$, i.e. the observed-data distribution is not restricted by the model.
\end{defi}

\subsection{Our Approach to Multiple-Systems Estimation}
\label{sec:outlook}

In the MSE literature, previous work has been concerned with determining \textit{when} certain models are conditionally identified \citep[see e.g.][]{Link_2003,Holzmann_2006}. Here we are concerned with determining both when \textit{and how} models are conditionally identified. Since the validity of our inferences rests on the untestable identifying assumption and any restrictions on the observed-data distribution being correct, we would like to know what identifying assumption we are actually making so we can determine whether or not the assumption is plausible in a given context. 
Thus, in this article our approach to MSE will be to use conditionally identified models that are based on \textit{explicitly specified} identifying assumptions. Additionally, to make as few testable assumptions as possible, we will use models where the observed-data distribution is only possibly restricted by the identifying assumption (i.e. $\tilde{S}=\tilde{T}$).%(i.e. the set of possible values for the observed cell probabilities, $\tilde{S}$, is equal to the domain of the identifying assumption, $\tilde{T}$), 

Given such a conditionally identified model%of the form $\PS_{\Omega}$, where $\Omega=\{N, \pi_0, \tilde{\bpi}\mid N\in \N, \pi_0=\T(\tilde{\bpi}), \tilde{\bpi}\in \tilde{S}\}$
, our approach to MSE is agnostic to the inferential framework used, so one can perform inference for $N$ in a frequentist or Bayesian framework. In Web Appendix B, we outline how computation, including sensitivity analyses probing the identifying assumption as we will describe in Section \ref{sec:revisit}, can be carried out in either framework using existing software. 

In the rest of this article, we examine the identifying assumptions (and sometimes lack thereof) associated with commonly used MSE models, and propose a new family of identifying assumptions. While these identifying assumptions may be useful in some applications, \textit{there is no one-size-fits-all solution}. In practice, the use of identifying assumptions should be accompanied by appropriate justification based on the context of the data. However, in some applications none of the identifying assumptions discussed in this article will be appropriate for the data at hand. There is no default identifying assumption that practitioners can fall back on, and so in these scenarios \textit{no estimate of the population size should be produced based on the data at hand}. Such a scenario is clearly unsatisfactory, and thus it is an important task for researchers in the field of MSE to develop new explicit identifying assumptions, so that practitioners are able to select identifying assumptions appropriate for their applications.

\section{Log-Linear and Latent Class Models} \label{sec:common_models}
In this section we describe two commonly used models, which we use to demonstrate the drawbacks of using models that either place unnecessary restrictions on the observed-data distribution or that are not based on explicit identifying assumptions.

%\vspace{-0.15cm}
\subsection{Log-Linear Models}
\label{sec:loglinear}
For $\bh\in H^*$, let $h_k$ denote the $k$th element of $\bh$. Any set of cell probabilities, $\bpi\in\simplex^{2^K-1}$, can be represented as  $\pi_{\bh}=\mu_{\bh}/\sum_{\bh'\in H}\mu_{\bh'}$, where $\log(\mu_{\bh})=\sum_{\bh'\in H^*}\lambda_{\bh'}\prod_{k=1}^K h_k^{h'_k},$
% \begin{equation}\label{eq:ll3}
%     \log(\mu_{\bh})=\sum_{\bh'\in H^*}\lambda_{\bh'}\prod_{k=1}^K h_k^{I(h'_k=1)},
% \end{equation}
for some set of log-linear parameters $\blambda=\{\lambda_{\bh}\}_{\bh\in H^*}\in \mathbb{R}^{2^K-1}$.
This leads to the \textit{log-linear parameterization} $\Theta_{LL}=\{N, \blambda \mid N\in\N, \blambda\in \mathbb{R}^{2^K-1}\}$.
% $H^k=\left\{\bh\in H^*\mid \sum_{k'=1}^Kh_{k'}=k\right\}$
Note that under this parameterization, there is no $\lambda_{(0,\cdots,0)}$, so that $\mu_{(0,\cdots,0)}=1$.

For cells in the incomplete table $\bh\in H^*$ such that $\sum_{k=1}^Kh_{k}=1$ we refer to $\lambda_{\bh}$ as a main effect; for $\bh\in H^*$ such that $\sum_{k=1}^Kh_{k}=\ell>1$ we refer to $\lambda_{\bh}$ as an $\ell$-way interaction. The main effects and interactions all have interpretations as log ratios of certain cross-product ratios \citep[see e.g. Chapter 2 of ][]{Bishop_1975}. Of particular interest is the $K$-way, or highest-order, interaction $\lambda_{\bone}$, where $\bone:=(1,\cdots,1)$, for which we have the relationship $\prod_{\bh\in H}\pi_{\bh}^{I_{odd}(\bh)}/\prod_{\bh\in H}\pi_{\bh}^{I_{even}(\bh)}=\exp\{(-1)^{K+1}\lambda_{\bone}\}$, where $I_{odd}(\bh)=I(\sum_{k=1}^K h_k \text{ is odd})$ and $I_{even}(\bh)=I(\sum_{k=1}^K h_k \text{ is even})$, using the convention that $0$ is even. This notation differs from \cite{Bishop_1975} as we index the complete table by $H=\{0,1\}^K$ rather than $\{2,1\}^K$.

%It is common to fix higher order interactions to $0$ to arrive at a more parsimonious parameterization. When this is done hierarchically (see e.g. chapter 8 of \cite{Agresti_2003}), the remaining log-linear parameters retain their interpretations. The \textit{saturated} log-linear parameterization, $\Omega_{LL}$, is the special case when only the highest-order interaction is set to $0$.

The model $\PS_{\Theta_{LL}}$ is equivalent to the unrestricted model $\PS_{\Theta}$, so we need to restrict $\Theta_{LL}$ to identify the unobserved cell probability $\pi_0$. It is standard in this scenario to set $\lambda_{\bone}=0$, so that there is no highest-order interaction in the model. %The resulting model is hierarchical and so the remaining parameters retain their interpretations \citep[see e.g. Chapter 2 of ][]{Bishop_1975}.
Referring to the resulting parameter space as $\Omega_{LL}$, we would like to understand the identifying assumption made by the \textit{saturated model} $\PS_{\Omega_{LL}}$. In Web Appendix C, we show $\PS_{\Omega_{LL}}$ is nonparametric identified and that this no-highest-order interaction (NHOI) assumption corresponds to the explicit identifying assumption $\T(\tilde{\bpi})=(\tilde{\Pi}_{odd}/\tilde{\Pi}_{even})/(1+\tilde{\Pi}_{odd}/\tilde{\Pi}_{even}),$ where $\tilde{\Pi}_{odd}=\prod_{\bh\in H^*}\tilde{\pi}_{\bh}^{I_{odd}(\bh)}$ and $\tilde{\Pi}_{even}=\prod_{\bh\in H^*}\tilde{\pi}_{\bh}^{I_{even}(\bh)}$, which we discuss in more detail in Section \ref{sec:revisit}.

%Under the assumption that $\lambda_{\bone}=0$ we have $\prod_{\bh\in H}\pi_{\bh}^{I_{odd}(\bh)}/\prod_{\bh\in H}\pi_{\bh}^{I_{even}(\bh)}=1$, which implies that $\tilde{\Pi}_{odd}/\tilde{\Pi}_{even}=\pi_0/(1-\pi_0)$, where $\tilde{\Pi}_{odd}=\prod_{\bh\in H^*}\tilde{\pi}_{\bh}^{I_{odd}(\bh)}$ and $\tilde{\Pi}_{even}=\prod_{\bh\in H^*}\tilde{\pi}_{\bh}^{I_{even}(\bh)}$. Solving for $\pi_0$, we find that the no-highest-order interaction (NHOI) assumption corresponds to the explicit functional relationship $\T(\tilde{\bpi})=(\tilde{\Pi}_{odd}/\tilde{\Pi}_{even})/(1+\tilde{\Pi}_{odd}/\tilde{\Pi}_{even}),$
% \begin{equation}
%     \label{eq:nhoiident}
%     \T(\tilde{\bpi})=\frac{\tilde{\Pi}_{odd}/\tilde{\Pi}_{even}}{1+\tilde{\Pi}_{odd}/\tilde{\Pi}_{even}},
% \end{equation} 
%which we discuss in more detail in Section \ref{sec:revisit}. As the observed-data distribution is not restricted by the NHOI assumption, the saturated model $\PS_{\Omega_{LL}}$ is nonparametric identified. % We note that since $\pi_0=1/(1 + \sum_{\bh\in H^*}\mu_{\bh})$ under the parameterization given in \eqref{eq:ll3}, we have that $\sum_{\bh\in H^*}\mu_{\bh}=\tilde{\Pi}_{even}/\tilde{\Pi}_{odd}$.

In practice there is an emphasis on achieving low variance estimates of the log-linear parameters and, consequentially, $N$. To this end, rather than just setting the highest-order interaction to zero and using the saturated model, it is common to further restrict the model and set other interactions to zero. This is the case, for example, when restricting to decomposable graphical models \citep{Madigan_1997}, or when only including main effects and 2-way interactions \citep{Silverman_2020}, which can be hard to justify in practice \citep[see e.g. ][]{Dellaportas_1999, Whitehead_2019}. This restricts the observed-data distribution, so that we are making a testable assumption that, in addition to the untestable identifying assumption, must be correct in order for inferences to be valid. The hope is that by specifying a model with fewer parameters, the resulting estimates will have lower variance if the chosen restricted model generated the data. However, if the chosen restricted model did not generate the data, estimates of $N$ can be arbitrarily biased, and more generally can have arbitrarily poor frequentist properties \citep{Regal_1991, Whitehead_2019}.
 
%The hope is that by specifying a model with fewer parameters, the resulting estimates will have lower variance if the chosen restricted model generated the data. However, this places unnecessary restrictions on the observed data distribution that can be hard to justify \citep{Whitehead_2019}. This is not desirable as it can cause estimates of $N$ to be arbitrarily biased, and more generally have arbitrarily poor frequentist properties,  if the chosen restricted model did not generate the data \citep{Regal_1991, Whitehead_2019}.
 
This is a classic bias-variance trade off, which has been acknowledged since the seminal work of \cite{Fienberg_1972} (edited to match our notation): ``In analyzing multiple recapture census data our aim is to fit the incomplete $2^K$ table by a log linear model with the fewest possible parameters, since the fewer parameters in an `appropriate' model for estimating $n_0$, the smaller the variance of the estimate. Thus it is not a good practice simply to use the saturated model. On the other hand, if we use a model with too few parameters, we introduce a bias into our estimate of population size that can possibly render the variance formulae of the next section meaningless.'' {Unlike \cite{Fienberg_1972}, we believe there is a clear route to take if one is using the NHOI assumption, in line with our approach described in Section \ref{sec:outlook}: make as few testable assumptions as possible (i.e. use the saturated model $\PS_{\Omega_{LL}}$) in the hopes of not being arbitrarily biased because of incorrect restrictions on the observed data distribution. If one does wish to produce lower variance estimators, we discuss in Web Appendix B how regularization can be used to reduce the variance of estimates, at the cost of increasing the bias of estimates, and some difficulties associated with using regularized estimators.}

\subsection{Latent Class models}
\label{sec:lcm}

Latent class models (LCMs) are typically motivated as models of multivariate categorical data that capture individual heterogeneity when the population can be stratified into $J$ classes, where lists sample individuals independently within each class \citep{Haberman_1979, Manrique-Vallier_2016}. Thus they are  so-called $M_{th}$ models as described in Web Appendix A \citep{Otis_1978}. Corollary 1 of \cite{Dunson_2009} shows that for any set of cell probabilities $\bpi\in\simplex^{2^K-1}$, there exists some $J<\infty$ such that $\bpi$ can be represented as a $J$-class latent class model, i.e. $\pi_{\bh} = \sum_{j=1}^J \nu_j \prod_{k=1}^{K} q_{jk}^{h_k} (1-q_{jk})^{1-h_k},$
% \begin{equation}\label{eq:lcm}
%     \pi_{\bh}=\sum_{j=1}^J\nu_j\prod_{k=1}^{K}q_{jk}^{h_k}(1-q_{jk})^{1-h_k},
% \end{equation}
where $\bnu=(\nu_1,\cdots,\nu_J)$ are class membership probabilities, and $\bq=\{q_{jk}\}_{j=1,k=1}^{J,K}$ are class specific observation probabilities for each list. This leads to the \textit{latent class model parameterization} $\Theta_{LCM}=\{N, \bnu, \bq, J \mid N\in\N, \bnu\in\simplex^{J-1}, \bq\in(0,1)^{J\times K}, J\in\N \}$. As $\PS_{\Theta_{LCM}}$ is equivalent to the unrestricted model $\PS_{\Theta}$, we need to restrict $\Theta_{LCM}$ to identify the unobserved cell probability $\pi_0$. It is common to fix the number of latent classes, $J$, in advance, to arrive at the the restricted parameterization $\Omega_{LCM, J}=\{N, \bnu, \bq \mid N\in\N, \bnu\in\simplex^{J-1}, \bq\in(0,1)^{J\times K}\}$.

In Web Appendix A we show that $\PS_{\Omega_{LCM,J}}$ is conditionally identified if and only if $2J\leq K$. However, when $\PS_{\Omega_{LCM,J}}$ \textit{is} conditionally identified we do not know what explicit identifying assumption is being made or whether the model is nonparametric identified. 
%However, we are unaware of any results that state, when $\PS_{\Omega_{LCM,J}}$ is conditionally identified, what explicit identifying assumption is being made or whether the model is nonparametric identified. 
A recent development in MSE is the use of LCMs with $J$ large enough that $2J>K$ \citep{Manrique-Vallier_2016}. Such LCMs with too many latent classes (i.e. $2J>K$) suffer from the opposite problem of log-linear models: rather than making too many assumptions, and hence restricting the observed-data distribution, so few assumptions are being made that the model is not conditionally identified. In Web Appendix D we show through a variety simulations that this is a practically relevant problem, as we have no guarantees for when estimates based on non-identified models are going to be accurate.

\section{Revisiting Log-Linear Models and Their Identifying Assumptions}
\label{sec:revisit}
In this section we revisit the NHOI  identifying assumption associated with log-linear models and discuss its role in our framing of MSE. We then describe how this assumption can be used as a building block for alternative identifying assumptions.

\subsection{The No-Highest-Order Interaction Assumption}
\label{sec:nhoiinterp}
%In order to use the no highest-order interaction assumption, practicioners need to judge whether the assumption is reasonable for their application. 
The NHOI assumption introduced in Section \ref{sec:loglinear} can be interpreted as follows: for any given subset of $K-1$ lists, appearing in all $K-1$ lists is not associated with appearing or not appearing in the $K$th list. Here the meaning of ``associated with" changes as the number of lists $K$ changes. When $K=2$ we are assuming that the odds of appearing in list $1$ conditional on appearing in list $2$ is equal to the odds of appearing in list $1$ conditional on not appearing in list $2$, and thus the lists are independent: $\pi_{(1,0)}/\pi_{(0,0)}=\pi_{(1,1)}/\pi_{(0,1)}$.  When $K=3$ we are assuming that the odds ratio for lists $1$ and $2$ conditional on appearing in list $3$ is equal to the odds ratio for lists $1$ and $2$ conditional on not appearing in list $3$: $\pi_{(1,1,1)}\pi_{(0,0, 1)}/(\pi_{(1,0,1)}\pi_{(0,1,1)})=\pi_{(1,1,0)}\pi_{(0,0, 0)}/(\pi_{(1,0,0)}\pi_{(0,1,0)})$. When $K=4$ we assume that certain ratios of odds ratios are equal, %when $K=5$ we assume that certain ratios of ratios of odds ratios are equal, 
and so on for larger $K$.

As discussed in Section \ref{sec:outlook}, in order to use the NHOI assumption in a given application, we need to be able to determine whether or not it is plausible. Odds and odds ratios are commonly used in statistics \citep{Bishop_1975}, and thus the NHOI assumption may be of use when there are $K=2$ or $K=3$ lists. However, higher order measures of association like ratios of odds ratio are more obscure and hard to interpret, which makes the NHOI assumption difficult to use when there are more than $K=3$ lists. This difficulty compounds when considering sensitivity analyses as we explain in the next section. 

\subsection{Sensitivity Analyses for the No-Highest-Order Interaction Assumption}
\label{sec:sensitivity}
Sensitivity analyses aim to gauge how sensitive inferences are to untestable assumptions, and are an important part of missing data workflows \citep[see Chapter 9 of][]{Hogan_2008}. 
The NHOI assumption facilitates sensitivity analyses based on varying the highest-order interaction across a range of non-zero values. In particular, when fixing $\xi=\exp\{(-1)^{K+1}\lambda_{\bone}\}\in\R^+$, 
%we can use the same argument as when $\xi=1$ given in Section \ref{sec:loglinear}, to 
we show in Web Appendix C that we arrive at the explicit identifying assumption $\T(\tilde{\bpi})=(\tilde{\Pi}_{odd}/\tilde{\Pi}_{even})/(\xi+\tilde{\Pi}_{odd}/\tilde{\Pi}_{even}).$
% \begin{equation}
%     \label{eq:sensident}
%     \T(\tilde{\bpi})=\frac{\tilde{\Pi}_{odd}/\tilde{\Pi}_{even}}{\xi+\tilde{\Pi}_{odd}/\tilde{\Pi}_{even}}.
% \end{equation} 
This generalizes the two list sensitivity analyses of \cite{Lum_2015} and \cite{Gerritse_2015}. %, and is related to the two list sensitivity analysis considered in \cite{Ericksen_1989}. 
Under this identifying assumption, rather than assuming certain measures of association are equal, we are assuming one measure is $\xi$ times another. For example, when $K=2$ we are assuming that the odds of appearing in list $1$ conditional on not appearing in list $2$ is $\xi$ times the odds of appearing in list $1$ conditional on appearing in list $2$: $\pi_{(1,0)}/\pi_{(0,0)}=\xi \pi_{(1,1)}/\pi_{(0,1)}$.

In order to perform a meaningful sensitivity analysis, one needs to be able to specify a range of values for the highest-order interaction that are plausible for a given application. Due to our understanding of odds and odds ratios, performing this sort of sensitivity analysis may be possible when there are $K=2$ or $K=3$ lists. When considering more than $K=3$ lists, it can become difficult to even start thinking about whether it is plausible that $\xi$ is less than or greater than $1$, let alone determine specific values of $\xi$ that are plausible. %\serge{For example, when $K=4$ one needs to determine whether a certain ratio of odds ratios is less than or greater than another ratio of odds ratios. Even with in-depth knowledge of how each list was generated, this would be an extremely difficult task.}

\subsection{\texorpdfstring{$K'$}{K'}-List Marginal No-Highest-Order Interaction Assumptions}
\label{sec:marg_ident}
The NHOI assumption can be used as a building block to generate other identifying assumptions. Suppose we can assume that, without loss of generality, the NHOI assumption holds for the first $1<K'<K$ lists, marginal of the remaining $K-K'$ lists. This leads to a new identifying assumption which in general does not imply that there is no highest-order interaction for all $K$ lists. To introduce this assumption formally we need to introduce some notation. Let $G=\{0,1\}^{K'}$ index the marginal $2^{K'}$ contingency table for the first $K'$ lists and $G^*=G\setminus\{0\}^{K'}$. For a set of cell probabilities, $\bpi\in\simplex^{2^K-1}$, and a given cell in the marginal table, $\bg\in G$, let $\pi_{\bg+}=\sum_{\bh\in H}\pi_{\bh}I\{(h_1,\cdots,h_{K'})=\bg\}$ denote the probability of being observed in cell $\bg$ of the marginal table implied by $\bpi$. Similarly let $\tilde{\pi}_{\bg+}=\sum_{\bh \in H^*} \tilde{\pi}_{\bh} I\{(h_1,\cdots,h_{K'})=\bg\}$ and $\tilde{\pi}_{0+}=\sum_{\bh\in H^*}\tilde{\pi}_{\bh}I\{(h_1,\cdots,h_{K'})=(0, \cdots, 0)\}$. 

Assuming that the NHOI assumption holds for the first $1<K'<K$ lists, marginal of the remaining $K-K'$ lists, is equivalent to assuming $\prod_{\bg\in G}\pi_{\bg+}^{I_{odd}(\bg)}/\prod_{\bg\in G}\pi_{\bg+}^{I_{even}(\bg)}=1$. In Web Appendix C we show that this $K'$-list marginal no-highest-order interaction assumption corresponds to the explicit identifying assumption $\T(\tilde{\bpi})=(\tilde{\Pi}_{odd, +}/\tilde{\Pi}_{even, +}- \tilde{\pi}_{0+})/(1+ \tilde{\Pi}_{odd, +}/\tilde{\Pi}_{even, +}-\tilde{\pi}_{0+}),$ where $\tilde{\Pi}_{odd, +}=\prod_{\bg\in G^*}\tilde{\pi}_{\bg+}^{I_{odd}(\bg)}$ and $\tilde{\Pi}_{even, +}=\prod_{\bg\in G^*}\tilde{\pi}_{\bg+}^{I_{even}(\bg)}$. Further, we can perform sensitivity analyses for this assumption by fixing $\prod_{\bg\in G}\pi_{\bg+}^{I_{odd}(\bg)}/\prod_{\bg\in G}\pi_{\bg+}^{I_{even}(\bg)}=\xi\in\R^+$. As we show in Web Appendix C, this leads to the explicit identifying assumption
\begin{equation}
\label{eq:margsens}
    \T(\tilde{\bpi})=\frac{\tilde{\Pi}_{odd, +}/\tilde{\Pi}_{even, +}- \xi\tilde{\pi}_{0+}}{\xi+ (\tilde{\Pi}_{odd, +}/\tilde{\Pi}_{even, +}-\xi\tilde{\pi}_{0+})}.
\end{equation} 
Models that use the assumption that $\prod_{\bg\in G}\pi_{\bg+}^{I_{odd}(\bg)}/\prod_{\bg\in G}\pi_{\bg+}^{I_{even}(\bg)}=\xi\in\R^+$ are not nonparametric identified, as the domain of the identifying assumption is $\tilde{T}=\{\tilde{\bpi}\in\simplex^{2^K-2}\mid \tilde{\Pi}_{odd, +}/(\tilde{\Pi}_{even, +}\tilde{\pi}_{0+})>\xi\}$. %\serge{Incidentally, this tells us that $\tilde{\Pi}_{odd, +}/(\tilde{\Pi}_{even, +}\tilde{\pi}_{0+})$ is an estimable upper bound for the true $\xi$ that generated the data.}

A special case of this identifying assumption was originally suggested in \cite{Regal_1998} as an alternative to the NHOI assumption. They considered a data set consisting of $K=3$ lists recording cases of spina bifida in upstate New York, %\citep{Hook_1980}, 
where they believed that the assumption that two of the lists were marginally independent (i.e., using the $2$-list marginal NHOI assumption) was more plausible than the NHOI assumption. This illustrates that there may be applications where one may be more willing to make marginal assumptions about a subset of $K'$ lists, rather than an assumption involving all $K$ lists. Additionally when there are $K>3$ lists and $K'=2$ or $K'=3$, the $K'$-list marginal NHOI assumption and its sensitivity analyses are much more straightforward to interpret than the highest-order interaction and its sensitivity analyses, as discussed in Sections \ref{sec:nhoiinterp} and \ref{sec:sensitivity}. 

For these reasons, we believe that the $K'$-list marginal NHOI assumption can be useful as an explicit identifying assumption in the toolbox of the MSE practitioner.  However, we emphasize here our message from Section \ref{sec:outlook}: there are no one-size-fits-all identifying assumptions. Specification of identifying assumptions in practice should be accompanied with appropriate justification based on the context of the data. In Section \ref{sec:choice_ident} we attempt to provide such a justification for our use of the $2$-list marginal NHOI assumption in an application estimating the number of civilian casualties in the Kosovo war.

\section{Civilian Casualties in the Kosovo War}
\label{sec:kosovo}
In this section we estimate the number of civilian casualties in the Kosovo war between March 20 and June 22, 1999, using data originally analyzed in \cite{Ball_2002}. The data consist of $K=4$ lists with $n=4400$ observed casualties, and are presented in Table \ref{tab:kosovodat}, reproduced from Section 6 of \protect\cite{Ball_2002}. Three of the lists were constructed from refugee interviews conducted separately by the American Bar Association Central and East European Law Initiative (ABA), Human Rights Watch (HRW), and the Organization for Security and Cooperation in Europe (OSCE). The fourth list was constructed from exhumation reports conducted on behalf of the International Criminal Tribunal for the Former Yugoslavia (EXH). We refer the reader to Appendix 1 of \cite{Ball_2002} for a detailed description of each list. 

\begin{table}[ht]
\centering
\caption{Kosovo dataset, reproduced from Section 6 of \protect\cite{Ball_2002}. %ABA refers to the American Bar Association Central and East European Law Initiative list, EXH refers to the exhumation reports list, HRW refers to the Human Rights Watch list, and OSCE refers to the Organization for Security and Cooperation in Europe list.
}
\label{tab:kosovodat}
\begin{tabular}{|l|l|l|l|l|l|}
    \hline
    & ABA & yes & yes & no & no \\ 
    \hline
    & EXH & yes & no & yes & no \\ 
    \hline
    HRW & OSCE &  &  &  &  \\ 
    \hline  
    yes & yes & 27 & 32 & 42 & 123 \\ 
    \hline
    yes & no & 18 & 31 & 106 & 306 \\ 
    \hline
    no & yes & 181 & 217 & 228 & 936 \\ 
    \hline
    no & no & 177 & 845 & 1131 & $n_0$ \\ 
    \hline
\end{tabular}
\end{table}

The Kosovo data was originally analyzed in \cite{Ball_2002} under the NHOI assumption, but as we discuss in the next section, we believe the $K'$-list marginal NHOI assumption is more appropriate. We will analyze the Kosovo data under both assumptions, highlighting the importance of careful specification of the identifying assumption.
% After motivating our choice of identifying assumption in the next section, we will analyze the Kosovo data under a variety of priors for the population size and observed cell probabilities. \toremove{Our purpose is to demonstrate the ease with which it is possible to perform inference in a model with explicit identifying assumptions and arbitrary priors, as described in Section \ref{sec:mixmatch}. This facilitates prior sensitivity analyses that decouple the identifying assumption from the model used for the observed cell probabilities. Since the same identifying assumption is being used for each prior combination, it follows that differences in fits are entirely due to differences in prior specifications. Compare this, for example, to the analyses of the Kosovo data set in \cite{Silverman_2020}, where the Bayesian LCM of \cite{Manrique-Vallier_2016} was compared to various Bayesian log-linear models. The fits of these different models are not comparable to each other, as they are using different identifying assumptions and different priors for the observed cell probabilities. Note that we are \textit{not} attempting to perform model selection or provide guidance on which prior specification one \textit{should} use.}

\subsection{Choice of Identifying Assumption}
\label{sec:choice_ident}
For our main analysis we will consider two identifying assumptions. 
The first assumption is the $2$-list marginal NHOI assumption described in Section \ref{sec:marg_ident}, where we will assume that the ABA and HRW lists are marginally independent. We believe this assumption is plausible given  that ``there were no overt efforts by any of the researchers to exclude or include witnesses who had participated in another data collection project" \citep[][p. 40]{AAAS_2000} and that the two lists had similarly extensive geographic reach in their interviews. In particular, ABA conducted interviews in Albania, Macedonia, Kosovo, the United States, and Poland, while HRW conducted interviews in Albania, Macedonia, Kosovo, and Montenegro. ABA only conducted around 10\% of its interviews in the United States and Poland, and HRW only conducted 3\% of its interviews in Montenegro. Further, within Kosovo, ABA and HRW conducted interviews in similar geographic regions. For more information on where the lists conducted interviews see Appendix 1 of \cite{Ball_2002}. 

The original analysis of the Kosovo data set in \cite{Ball_2002} used the NHOI assumption described in Section \ref{sec:loglinear}. To justify this assumption for the Kosovo data, as we have $K=4$ lists, we would need to reason about certain ratios of odds ratios being equal, which can be difficult, as discussed in Section \ref{sec:nhoiinterp} and further explained in Web Appendix E.
% \toremove{Nevertheless, to further demonstrate the use of the computational approach described in Section \ref{sec:mixmatch}, in Web Appendix D we present an analysis of the Kosovo data using the NHOI assumption.}
Nevertheless, we will also analyze the Kosovo data using the NHOI assumption to highlight the importance of careful specification of the identifying assumption.

\subsection{Inference}
\label{sec:ap_inf}
For each identifying assumption, our main analysis will present both a frequentist analysis and a Bayesian analysis, using the methods discussed in Web Appendix B{, to demonstrate how our proposed approach to MSE is agnostic to the inferential framework used}. The Bayesian analysis will use a negative-binomial prior for $N$ and the prior induced for the observed cell probabilities $\tilde{\bpi}$ from using the Dirichlet process prior of \cite{Manrique-Vallier_2016} for the $J$ class LCM $\Omega_{LCM,J}$, with $J=10$ and default hyperparameters, as implemented in the \texttt{R} package \texttt{LCMCR} (see Web Appendix B for further details). In Web Appendix E we perform a prior sensitivity analysis for the Bayesian analyses, exploring the impact of the priors for $N$ and $\tilde{\bpi}$ on our estimates of $N$.

To inform the negative-binomial prior for $N$, we will rely on two studies that attempted to estimate the number of casualties in the Kosovo war using different data sources than \cite{Ball_2002}. \cite{Spiegel_2000} estimated there were $12000$ casualties with a $95\%$ confidence interval of $[5500, 18300]$. \cite{Iacopino_2001} estimated there were $8000$ casualties with a $95\%$ confidence interval of $[5800, 10200]$. %The estimates of these two studies are for different time periods than the data from \cite{Ball_2002}: \cite{Spiegel_2000} produced estimates for the period of February, 1998 to June, 1999 and \cite{Iacopino_2001} produced estimates for the period of April, 1998 to April, 1999. However, the period with the most human rights abuses reported in \cite{Iacopino_2001} and the period with the highest mortality rates in \cite{Spiegel_2000} overlapped with the study period of \cite{Ball_2002}. Thus we believe that these estimates can still meaningfully inform our prior for $N$.  
Using the negative-binomial parameterization given in Table 1 of Web Appendix B, we will use a specification with mean $M=10000$ (the average of the estimates from the two studies) and overdispersion parameter $a=1.6$, which places $95\%$ of the prior mass on $[818, 30371]$. This specification is meant to be weakly informative in the sense that the information it incorporates is intentionally weaker than what is available to us, so as to provide a proper alternative to the ``noninformative" improper scale prior $p(N)\propto 1/N$ discussed in Web Appendix B \citep[see e.g. ][]{Gelman_2017}. %\serge{In the supplementary material we additionally consider two informative specifications centered on the estimates of \cite{Iacopino_2001} and \cite{Spiegel_2000}.} 
This prior places mass below the observed sample size of $n=4400$, as we are not attempting to use the observed data to inform our prior. Practically speaking this does not make a difference, as the prior is effectively truncated to $[n, \infty)$ when performing posterior inference.

\subsection{Main Analysis}
% Describe marginal NHOI results, including brief description of sensitivity analyses for the prior
In Table \ref{tab:main_analysis_table1} we present the results from our frequentist and Bayesian analyses under the $2$-list marginal NHOI assumption, i.e. assuming marginal independence of the ABA and HRW lists.  {Assuming marginal independence of the ABA and HRW lists, under a frequentist analysis we estimate there were $9691$ civilian casualties, with a $95\%$ confidence interval of $[8074, 11308]$, and under a Bayesian analysis with the chosen priors we estimate there were $9359$ civilian casualties, with a $95\%$ credible interval of $[7967, 11059]$.} These point estimates and uncertainty intervals from these two analyses are in close agreement. Both of the uncertainty intervals include the point estimate from \cite{Iacopino_2001}, but not from \cite{Spiegel_2000}, and fall within the confidence interval of \cite{Spiegel_2000}. Based on the results of the prior sensitivity analysis in Appendix E, the Bayesian analysis is not sensitive to the prior choices for $N$ and $\tilde{\bpi}$.

\begin{table}[ht]
\centering
\caption{Point estimates and  $95\%$ uncertainty intervals for $N$ under the $2$-list marginal NHOI assumption. For the Bayesian analysis the point estimate is the posterior mean.}
\label{tab:main_analysis_table1}
\begin{tabular}{rll}
    \hline
    & Point Estimate & Uncertainty Interval \\ 
    \hline
    Frequentist & 9691& [8074, 11308] \\ 
    Bayesian & 9359 & [7967, 11059] \\
    \hline
\end{tabular}
\end{table}

% Describe NHOI results, including brief description of sensitivity analyses for the prior
In Table \ref{tab:main_analysis_table2} we present the results from our frequentist and Bayesian analyses under the NHOI assumption. {Under the NHOI assumption, under a frequentist analysis we estimate there were $16941$ civilian casualties, with a $95\%$ confidence interval of $[5304, 28579]$, and under a Bayesian analysis with the chosen priors we estimate there were $14071$ civilian casualties, with a $95\%$ credible interval of $[9321, 21604]$.} The point estimates and uncertainty intervals from these two analyses are in relative agreement. Both of the uncertainty intervals include the point estimate from \cite{Spiegel_2000}, and the frequentist confidence interval includes the point estimate. Based on the results of the prior sensitivity analysis in Appendix E, the Bayesian analysis is fairly sensitive to the prior choices for $N$ and  $\tilde{\bpi}$.

\begin{table}[ht]
\centering
\caption{Point estimates and  $95\%$ uncertainty intervals for $N$ under the NHOI assumption. For the Bayesian analysis the point estimate is the posterior mean.}
\label{tab:main_analysis_table2}
\begin{tabular}{rll}
  \hline
 & Point Estimate & Uncertainty Interval \\ 
  \hline
    Frequentist & 16941& [5304, 28579] \\ 
    Bayesian &  14071& [9321, 21604]\\
   \hline
\end{tabular}
\end{table}

% Compare the results from the two assumptions
Focusing on point estimates, we see a large difference between the analyses under the two identifying assumptions (besides the uncertainty interval widths being considerably larger under the NHOI assumption). The point estimates under the NHOI assumption are $75\%$ larger for the frequentist analyses ($50\%$ larger for the Bayesian analyses) than the point estimates under the $2$-list marginal NHOI assumption.  If the $2$-list marginal NHOI assumption truly holds, as we are inclined to believe based on the justification provided in Section \ref{sec:choice_ident}, an analysis based on using the NHOI assumption produces estimates with a large positive bias for the Kosovo data. This should serve as an illustration of the dangers of using the NHOI assumption (or any other identifying assumption) that can not be justified based on the context of the data. If a practitioner can not find an identifying assumption that is appropriate for their data, no estimate of the population size should be produced based on their data, as there is no one-size-fits-all or default identifying assumption to fall back on. There is a need for researchers to develop new explicit identifying assumptions, so that practitioners do not find themselves in such a scenario.

\subsection{A Sensitivity Analysis Probing the \texorpdfstring{$2$}{2}-List Marginal NHOI Assumption}
\label{sec:ap_sens}
While we believe that it is plausible that the ABA and HRW lists are marginally independent, we would also like to understand how sensitive our resulting estimates are to realistic violations of the assumption. If this marginal independence was violated, it would likely be the case that the lists are positively dependent and thus population size estimates under marginal independence are downward biased, as is common in human rights applications \citep[see e.g. the discussion in Section 5 of][]{Lum_2015}. In particular, HRW selected regions in Kosovo to conduct interviews based on reports of human rights violations from refugees and other sources \citep{AAAS_2000}. Thus it seems possible that a casualty appearing in HRW could be more likely to appear in ABA than a casualty that did not appear in HRW. 

We now perform a sensitivity analysis probing the $2$-list marginal NHOI assumption. In Web Appendix E, we provide a similar sensitivity analysis probing the NHOI assumption. 
We will consider models with the identifying assumption \eqref{eq:margsens}, varying $\xi$ over $\{0.7, 0.8, 0.9, 1\}$.
Thus in each case we are assuming that the odds of appearing in ABA conditional on not appearing in HRW is $\xi$ times the odds of appearing in ABA conditional on appearing in HRW, with $\xi=1$ corresponding to the $2$-list marginal NHOI assumption. For each value of $\xi$, we will present both a frequentist analysis and a Bayesian analysis, with the Bayesian analysis using the same priors from the main analysis as presented in Section \ref{sec:ap_inf}.
% Techniques from the prior elicitation literature could be used to elicit a more specific range of values for $\xi$ from subject matter experts in future applications \serge{add citations}.
In Table \ref{tab:sensitivity_analysis_table1} we present the results from our frequentist and Bayesian analyses under each identifying assumption. 

% \begin{figure}[ht]
% \centering
% \includegraphics[width=1\linewidth]{sensitivity_analysis_marg_nhoi_plot4.pdf}
% \begin{minipage}[b]{0.95\textwidth}
%     \caption{Posterior density of $N$ under each identifying assumption, using the LCM prior for $\tilde{\bpi}$ of \protect\cite{Manrique-Vallier_2016} and the negative-binomial prior for $N$.}
%     \label{fig:sensitivity_analysis_plot1}
% \end{minipage} 
% \end{figure}
% \vspace{-0.5cm}
\begin{table}[ht]
\centering
\caption{Point estimates and $95\%$ uncertainty intervals for sensitivity analysis probing the $2$-list marginal NHOI assumption. For the Bayesian analysis the point estimate is the posterior mean. In this table $\xi$ is a marginal odds ratio, as described in Section \ref{sec:marg_ident}.}
\label{tab:sensitivity_analysis_table1}
\begin{tabular}{rlllll}
  \hline
  & $\xi$ = 1 &$\xi$ = 0.9 &$\xi$ = 0.8 &$\xi$ = 0.7 \\ 
  \hline
 Frequentist & 9691 [8074, 11308] & 10534 [8738, 12330] & 11588 [9568, 13607] & 12942 [10636, 15249] \\ 
   Bayesian & 9359 [7967, 11059] & 10155 [8607, 12038] & 11147 [9419, 13258] & 12419 [10451, 14816] \\ 
   \hline
\end{tabular}
\end{table}

The estimates of the number of casualties $N$ increase as the amount of assumed positive dependence  increases, i.e. as $\xi$ decreases, as expected. When $\xi=0.9$, the point estimates and uncertainty intervals are still largely compatible with the point estimates and uncertainty intervals under marginal independence. Thus our estimates under marginal independence are not sensitive to this small amount of positive dependence. However, this is not still the case under stronger positive dependence. When $\xi=0.7$, the uncertainty intervals barely overlap with the uncertainty intervals under marginal independence, and further they do not contain the point estimates under marginal independence. While this may seem like cause for concern, we note that these estimates under stronger positive dependence are still within an order of magnitude of the estimates under independence, and all uncertainty intervals in this sensitivity analysis fall within the confidence interval of \cite{Spiegel_2000}. We note that the frequentist analysis requires a marginal odds ratio of $\xi\approx 0.51$ to produce a point estimate as large as the point estimate under the NHOI assumption. This is a large amount of positive dependence which casts further doubt on the plausibility of the NHOI assumption.

\section{Discussion}
\label{sec:discuss}

{In this article we revisited the framing of MSE as a missing data problem and proposed an approach for MSE that places the identifying assumption front and center in the MSE workflow.} As we have emphasized throughout this article, a natural next step is to develop new explicit identifying assumptions, for situations where the identifying assumptions described in Section \ref{sec:revisit} can not be justified in the context of a given data set.
%Our approach can, in theory, be used with any explicit identifying assumption. However, these alternative explicit identifying assumptions need to be developed. 
We believe that this is an extremely under-researched problem that will hopefully gain attention with the re-framing of MSE we present in this article.

The presentation of MSE in this article was focused on estimating the size of a single population. When the population can be stratified based on observed covariates, such as location or time, it may be desirable to estimate the population sizes within each strata. In theory, the methodology developed in this article could be applied independently to each strata. However, stratification can lead to sparse contingency tables, which need significant regularization when estimating $\tilde{\bpi}$. In this case, it would be desirable to develop observed data models that borrow strength across strata.

\newpage
\appendix

\numberwithin{equation}{section}
\numberwithin{lemma}{section}
\numberwithin{theorem}{section}

\section{Web Appendix A: Conditional Identifiability in Models for Heterogeneity}
The purpose of this appendix is to show how common models for heterogeneity fit into the model described in Section 2.2 of the main text, and to provide results regarding conditional identifiability in a particular family of heterogeneous models. The material presented in Appendices \ref{sec:mth_cond_ident}, \ref{sec:lcmthrm}, \ref{sec:lemmaproof}, and \ref{sec:lemmaproof2} previously appeared in the unpublished preprint \cite{Aleshin-Guendel_2020}.

\subsection{Models for Heterogeneity}
Consider the following heterogeneous model
\begin{equation}
	\label{eq:general_model}
	\begin{array}{rl}
		\bpi^i& \stackrel{i.i.d.}{\sim} Q,\\
		\bx_i\mid\bpi^i& \stackrel{ind.}{\sim} \textsc{Categorical}(\bpi^i),
	\end{array}
\end{equation}
where $\bpi^i=\{\pi^i_{\bh}\}_{{\bh}\in H} \in \simplex^{2^K-1}$ for $i=1,\dots,N$. Under this model each individual has its own set of cell probabilities, $\bpi^i$, drawn from some mixing distribution $Q$ on $\mathbb{S}^{2^K-1}$. Working with the heterogeneous model in \eqref{eq:general_model} is equivalent, after marginalizing out $\bpi^i$, to working with the complete-data distribution in Equation (1) of the main text, where $\bpi:=\bpi_Q=E_{Q}(\bpi^i)$ and $E_Q$ denotes the expectation with respect to the mixing distribution $Q$. This is a consequence of the data only providing information about the first moment of the mixing distribution. Suppose $\QQ$ is a family of mixing distributions on $\mathbb{S}^{2^K-1}$. For $Q\in\QQ$, let $\pi_{Q,0}$ denote the induced observed cell probability and $\tilde{\bpi}_Q$ denote the induced observed cell probabilities. The parameter space induced by the family $\QQ$, as a subset of the observed-data parameterization, can then be written as $\Omega_{\QQ}=\{N, \pi_0, \tilde{\bpi}\mid N\in\N, \pi_0=\pi_{Q,0} \text{ and } \tilde{\bpi}=\tilde{\bpi}_Q \text{ for some } Q\in\QQ\}$.

The general heterogeneous model in \eqref{eq:general_model} captures
common models for heterogeneity, including the $M_h$ and $M_{th}$ models \citep{Otis_1978}. The $M_{th}$ model assumes the individual cell probabilities take the form $\pi_{\bh}^i=\prod_{k=1}^K (q_k^i)^{h_k}(1-q_k^i)^{1-h_k}$, where $(q_1^i,\cdots, q_K^i)\stackrel{i.i.d.}{\sim} Q$ and $Q$ is a mixing distribution on $(0,1)^K$. Under this model, conditional on an individual's sampling probabilities, $(q_1^i,\cdots, q_K^i)$, each individual is independently sampled by each list. The $M_h$ model is a submodel of the $M_{th}$ model that assumes that the individual sampling probabilities, $(q_1^i,\cdots, q_K^i)$, are the same for each list, i.e. $q_1^i=\cdots=q_K^i$. Thus the $M_h$ model assumes individuals have the same probability of being sampled by each list. After marginalizing out $\bpi^i$, this enforces a symmetry where the probability of appearing in $k$ lists is the same for each subset of $k$ lists. We do not believe this is plausible in human population settings.

\subsection{Conditional Identifiability in $M_{th}$ Models}
\label{sec:mth_cond_ident}
While there exists a literature characterizing identifiability in $M_{h}$ models \citep{Huggins_2001, Link_2003, Holzmann_2006, Link_2006}, no such results exist for $M_{th}$ models. The purpose of this section is to provide a mechanism for verifying whether the $M_{th}$ model $\PS_{\Omega_{\QQ}}$ is conditionally identifiable based on moments of the mixing distributions $Q\in\QQ$, analogously to the results for $M_h$ models presented in \cite{Holzmann_2006}.

Before proving the main theorem of this section, we have the following lemma, which tells us that for any mixing distribution $Q$ on $(0,1)^K$, the induced cell probabilities, $\bpi_Q$, only depend on $Q$ through its mixed moments. 
\begin{lemma}
	\label{lemma:mixed}
	For any $\bh\in H^*$, $\pi_{Q,\bh}=\sum_{\bh'\in H^*} c_{\bh, \bh'}m_{Q,\bh'}$ where $c_{\bh, \bh'}=(-1)^{\sum_{k=1}^K h'_k - h_k}\prod_{k=1}^K I(h_k \leq h'_k)$ and $m_{Q,\bh'}=E_Q(\prod_{k=1}^K q_k^{h_k'})$.
\end{lemma}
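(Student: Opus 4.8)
The plan is to compute $\pi_{Q,\bh}=E_Q[\pi^i_{\bh}]$ directly from the $M_{th}$ form of the individual cell probabilities and then reorganize it into a signed sum of mixed moments via an inclusion--exclusion expansion. First I would write $\pi_{Q,\bh}=E_Q\left[\prod_{k=1}^K q_k^{h_k}(1-q_k)^{1-h_k}\right]$ and observe that the integrand factors as $\prod_{k:\,h_k=1}q_k\cdot\prod_{k:\,h_k=0}(1-q_k)$, since a coordinate with $h_k=1$ contributes $q_k$ while a coordinate with $h_k=0$ contributes $1-q_k$.

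Next I would expand the product over the zero-coordinates of $\bh$ by choosing, for each $k$ with $h_k=0$, either the summand $1$ or the summand $-q_k$. This yields $\prod_{k:\,h_k=0}(1-q_k)=\sum_{S\subseteq\{k:\,h_k=0\}}(-1)^{|S|}\prod_{k\in S}q_k$. Multiplying by $\prod_{k:\,h_k=1}q_k$ and reindexing each subset $S$ by the binary vector $\bh'$ whose support is $\{k:\,h_k=1\}\cup S$, the resulting monomial becomes $\prod_{k=1}^K q_k^{h'_k}$.

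The key bookkeeping step is to check that this reindexing reproduces the stated coefficients exactly. Because $S$ ranges over subsets of the zero-coordinates of $\bh$, the vectors $\bh'$ that arise are precisely those satisfying $h_k\le h'_k$ for all $k$, which is encoded by $\prod_{k=1}^K I(h_k\le h'_k)$; moreover, since $\bh\ne(0,\dots,0)$, every such $\bh'$ is also nonzero, so the sum remains within $H^*$ and no $\bh'=(0,\dots,0)$ term appears. I would then verify that $(-1)^{|S|}=(-1)^{\sum_{k=1}^K(h'_k-h_k)}$, which holds because the support of $\bh$ and $S$ are disjoint, so $|S|=\sum_{k=1}^K h'_k-\sum_{k=1}^K h_k=\sum_{k=1}^K(h'_k-h_k)$ under the domination constraint. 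Taking $E_Q$ term by term replaces $\prod_{k=1}^K q_k^{h'_k}$ with $m_{Q,\bh'}$, giving $\pi_{Q,\bh}=\sum_{\bh'\in H^*}c_{\bh,\bh'}\,m_{Q,\bh'}$ as claimed.

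The argument is essentially a routine inclusion--exclusion computation, so I do not anticipate a genuine obstacle. The only points requiring care are the bijective reindexing from subsets $S$ of the zero-coordinates to dominating vectors $\bh'$, and the verification that the exponent of $-1$ matches $\sum_{k=1}^K(h'_k-h_k)$; both are direct once the factorization and expansion are set up correctly.
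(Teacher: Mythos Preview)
Your proposal is correct and follows essentially the same approach as the paper: both expand $\prod_{k=1}^K q_k^{h_k}(1-q_k)^{1-h_k}$ into a signed sum of monomials $\prod_k q_k^{h'_k}$ indexed by $\bh'\succeq\bh$ and then take expectations term by term. The paper compresses this step into a one-line appeal to the ``multi-binomial theorem,'' whereas you spell out the inclusion--exclusion bookkeeping explicitly, but the underlying computation is identical.
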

\begin{proof}
	For all $\bh\in H^*$, $\prod_{k=1}^K q_k^{h_k}(1-q_k)^{1-h_k}=
	\sum_{\bh'\in H^*} c_{\bh, \bh'} \prod_{k=1}^K q_k^{h'_k}$ by an application of the multi-binomial theorem (a generalization of the binomial theorem). The result follows from taking the expectation over both sides with respect to $Q$.
\end{proof}

We can restate Lemma \ref{lemma:mixed} in matrix form. Letting  $\bpi_Q^*=(\pi_{Q,\bh})_{\bh\in H^*}$ and $\bmm_Q=(m_{Q,\bh})_{\bh\in H^*}$, we have that $\bpi_Q^*=C\bmm_Q$, where $C=(c_{\bh, \bh'})_{\bh\in H^*, \bh'\in H^*}$. $C$ is invertible as it is upper triangular with non-zero diagonal entries. We are now ready to prove Theorem \ref{theorem:moments}.
\begin{theorem}
	\label{theorem:moments}
	For any two distributions $Q,R$ on $(0,1)^K$, $\tilde{\bpi}_Q=\tilde{\bpi}_R$ is equivalent to $\bmm_Q=A \bmm_R$ for some $A>0$. 
\end{theorem}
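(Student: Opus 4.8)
The plan is to exploit the matrix identity $\bpi_Q^* = C\bmm_Q$ recorded just before the statement, together with the invertibility of $C$, so that the whole argument reduces to a claim about proportionality of vectors under an invertible linear map.

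First I would translate the condition $\tilde{\bpi}_Q = \tilde{\bpi}_R$ into a statement about the unnormalized vectors $\bpi_Q^*$ and $\bpi_R^*$. Since each coordinate $\pi_{Q,\bh} = E_Q\big[\prod_{k=1}^K q_k^{h_k}(1-q_k)^{1-h_k}\big]$ is the expectation of a strictly positive quantity whenever $Q$ is supported on $(0,1)^K$, every entry of $\bpi_Q^*$ (and of $\bpi_R^*$) is strictly positive. The observed cell probabilities arise by normalizing, $\tilde{\pi}_{Q,\bh} = \pi_{Q,\bh}/(1-\pi_{Q,0})$ with $1-\pi_{Q,0} = \sum_{\bh\in H^*}\pi_{Q,\bh}$, so $\tilde{\bpi}_Q$ is simply $\bpi_Q^*$ rescaled to sum to one. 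For two strictly positive vectors, equality after rescaling to the unit simplex is equivalent to the two vectors being positive scalar multiples of one another. Hence $\tilde{\bpi}_Q = \tilde{\bpi}_R$ holds if and only if $\bpi_Q^* = \lambda\bpi_R^*$ for some $\lambda > 0$; concretely one can take $\lambda = (1-\pi_{Q,0})/(1-\pi_{R,0})$.

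Next I would push this proportionality through $C^{-1}$. Using $\bpi_Q^* = C\bmm_Q$ and $\bpi_R^* = C\bmm_R$, the relation $C\bmm_Q = \lambda C\bmm_R = C(\lambda\bmm_R)$ together with the invertibility of $C$ (upper triangular with nonzero diagonal, as already noted) yields $\bmm_Q = \lambda\bmm_R$; setting $A = \lambda > 0$ establishes one direction. For the converse, if $\bmm_Q = A\bmm_R$ for some $A > 0$, then $\bpi_Q^* = C\bmm_Q = A\,C\bmm_R = A\bpi_R^*$, so the unnormalized vectors are positive multiples and rescaling gives $\tilde{\bpi}_Q = \tilde{\bpi}_R$.

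I do not anticipate any serious obstacle: the substance of the theorem is just the invertibility of $C$ plus the observation that normalization discards only a positive scaling factor. The one point deserving a little care is the equivalence between equality of normalized vectors and proportionality of the originals, which relies on strict positivity of all coordinates, and this is exactly what the support assumption $Q,R$ on $(0,1)^K$ supplies. It is also worth remembering that $A$ is not free in the forward direction but is pinned down as the ratio of the two observed-at-least-once probabilities, whereas the statement asserts only the existence of some positive $A$, which is all that is required.
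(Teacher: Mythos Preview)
Your proposal is correct and follows essentially the same route as the paper: translate $\tilde{\bpi}_Q=\tilde{\bpi}_R$ into $\bpi_Q^*=A\bpi_R^*$ with $A=(1-\pi_{Q,0})/(1-\pi_{R,0})>0$, then apply the identity $\bpi^*=C\bmm$ together with the invertibility of $C$. You are slightly more explicit than the paper in separating the two directions of the equivalence and in flagging why strict positivity (from the support on $(0,1)^K$) makes the normalization step reversible, but the substance is identical.
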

\begin{proof}
	$\tilde{\bpi}_Q=\tilde{\bpi}_R$ is equivalent to $\bpi_Q^*/(1-\pi_{Q,0})=\bpi_R^*/(1-\pi_{R,0}).$ Rearranging terms we have that $\bpi_Q^*=\bpi_R^*(1-\pi_{Q,0})/(1-\pi_{R,0}),$ and thus $\bpi_Q^*=A\bpi_R^*$, where $A=(1-\pi_{Q,0})/(1-\pi_{R,0})>0$. Using Lemma \ref{lemma:mixed}, this is equivalent to $C\bmm_Q=AC\bmm_R$, and thus $\bmm_Q=A \bmm_R$ due to the invertibility of $C$. 
\end{proof}

The immediate consequence of Theorem \ref{theorem:moments} is that to verify conditional identifiability of an $M_{th}$ model $\PS_{\Omega_{\QQ}}$, one can demonstrate that if $\bmm_Q=A \bmm_R$ for some $Q,R\in\QQ$, then $\pi_{Q,0}=\pi_{R,0}$. We use this mechanism in the next section to characterize when latent class models (LCMs) are conditionally identifiable.

\subsection{Conditional Identifiability of Latent Class Models}
\label{sec:lcmthrm}
We denote the family of mixing distributions corresponding to LCMs with $J$ classes by $\QQ_J = \{ Q = \sum_{j=1}^J\nu_{Q, j}\prod_{k=1}^K\delta_{q_{Q,jk}} \mid \nu_{Q, j}\geq 0, \sum_{j=1}^J\nu_{Q, j}=1, q_{Q,jk}\in(0,1)^K\}$, so that $\PS_{\Omega_{\QQ_J}}$ is equivalent to $\PS_{\Omega_{LCM,J}}$ from the main text. To provide necessary and sufficient conditions for $\PS_{\Omega_{\QQ_J}}$ to be conditionally identifiable, we restrict the family of mixing distributions to 
$\QQ_J = \{ Q = \sum_{j=1}^J\nu_{Q, j}\prod_{k=1}^K\delta_{q_{Q,jk}} \mid \nu_{Q, j}\geq 0, \sum_{j=1}^J\nu_{Q, j}=1, q_{Q,jk}\in(0,1)^K , q_{Q,jk}\neq q_{Q,j'k} \text{ for } j\neq j'\}$.
This restriction makes the mild assumption that each class' sampling probabilities are distinct, which simplifies the proof of Theorem \ref{thm:lcmthm}. Loosening this restriction could only make the conditions on $J$ for $\QQ_J$ to be identifiable stricter, and thus the conclusions we reach in Section \ref{sec:limits} would still stand for families where this restriction is violated.

There are $J(K+1) - 1$ parameters in $\QQ_J$, thus when $\PS_{\Omega_{\QQ_J}}$ is conditionally identifiable, $J$ satisfies $J(K+1) - 1\leq 2^K -2$, as the observed cell probabilities, $\tilde{\bpi}_Q$, are $2^K -2$ dimensional. However, we now prove that $J$ must satisfy a stricter condition for $\PS_{\Omega_{\QQ_J}}$ to be conditionally identifiable. In Section \ref{sec:limits} we discuss some limitations of this result.

\begin{theorem}
	\label{thm:lcmthm}
	$\PS_{\Omega_{\QQ_J}}$ is conditionally identifiable iff $2J\leq K$.
\end{theorem}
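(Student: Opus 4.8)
The plan is to reduce the statement, via Theorem~\ref{theorem:moments}, to a question about when proportional moment vectors force a common unobserved cell probability, and then to settle that question with a generating-function argument whose crux is a counting of lists. First I would record the reduction. By standard identifiability of the multinomial conditional likelihood, $\PS_{\Omega_{\QQ_J}}$ is conditionally identifiable iff $\tilde{\bpi}_Q=\tilde{\bpi}_R$ implies $\pi_{Q,0}=\pi_{R,0}$ for all $Q,R\in\QQ_J$, and by Theorem~\ref{theorem:moments} the hypothesis $\tilde{\bpi}_Q=\tilde{\bpi}_R$ is the same as $\bmm_Q=A\bmm_R$ for some $A>0$. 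Expanding $\prod_k(1-q_k)$ into multilinear monomials and using $m_{Q,\boldsymbol{0}}=1$ gives $\pi_{Q,0}=1+\sum_{\bh\in H^*}(-1)^{\sum_k h_k}m_{Q,\bh}$, so $\bmm_Q=A\bmm_R$ yields $\pi_{Q,0}-A\pi_{R,0}=1-A$. Hence $\pi_{Q,0}=\pi_{R,0}$ iff $A=1$, and the theorem reduces to showing that $2J\le K$ is exactly the condition under which $\bmm_Q=A\bmm_R$ forces $A=1$.

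The unifying tool is the generating function $P_Q(\boldsymbol{t})=\sum_{j=1}^J \nu_{Q,j}\prod_{k=1}^K(1+q_{Q,jk}t_k)$, whose coefficient on $\prod_{k\in S}t_k$ is precisely the mixed moment $m_{Q,\mathbf{1}_S}$ (with constant term $1$). Thus $\bmm_Q=A\bmm_R$ is equivalent to the polynomial identity $P_Q(\boldsymbol{t})-A\,P_R(\boldsymbol{t})=1-A$ for all $\boldsymbol{t}\in\R^K$. For sufficiency ($2J\le K$) I would isolate the constant term. The combined supports of $Q$ and $R$ consist of at most $2J$ points of $(0,1)^K$; since $2J\le K$ I can assign to each such point a distinct list $k$ and set $t_k=-1/(\text{that point's }k\text{th coordinate})$, which annihilates the corresponding factor $(1+q_{\cdot k}t_k)$. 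Every point-mass term in $P_Q-A\,P_R$ is then killed by its own coordinate while the right-hand side is unchanged, so $0=1-A$ and $A=1$. This step pinpoints why the threshold is $2J\le K$: one needs one free list per atom in order to annihilate all atoms simultaneously.

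For necessity ($2J>K$, i.e.\ $K+1\le 2J$) I would build a counterexample on the diagonal $\boldsymbol{p}=(x,\dots,x)$, where all moments collapse to power sums: a signed combination $\sum_i w_i\delta_{x_i\boldsymbol{1}}$ has vanishing $H^*$-moments iff $\sum_i w_i x_i^\ell=0$ for $\ell=1,\dots,K$. Choosing $K+1$ distinct $x_i\in(0,1)$ and the divided-difference weights $w_i=1/\big(x_i\prod_{i'\ne i}(x_i-x_{i'})\big)$ makes these $K$ power sums vanish while $\sum_i w_i\ne 0$; the $w_i$ alternate in sign, so since $K+1\le 2J$ at most $J$ are positive and at most $J$ negative. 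Normalizing the positive part to $Q$ and the negated negative part to $R$ produces $Q,R\in\QQ_J$ with $\bmm_Q=A\bmm_R$ and $A\ne 1$ (the total signed mass is nonzero), hence $\pi_{Q,0}\ne\pi_{R,0}$ by the identity above.

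The main obstacle I anticipate is the bookkeeping in the necessity direction: verifying that the alternating-sign divided-difference vector genuinely splits into at most $J$ positive and $J$ negative atoms, that the induced $q_{Q,jk}$ and $q_{R,lk}$ satisfy the coordinate-distinctness built into $\QQ_J$, and that $A\ne 1$ (which the relation $\pi_{Q,0}-A\pi_{R,0}=1-A$ reduces to confirming $\sum_i w_i\ne 0$). The sufficiency direction, by contrast, is clean once the generating-function isolation trick is in hand, and I would present that argument first since it also explains the origin of the $2J\le K$ boundary.
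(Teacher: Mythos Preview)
Your proposal is correct and takes a genuinely different route from the paper in both directions. For sufficiency ($2J\le K$), the paper writes $\bmm_Q=A\bmm_R$ as a linear system $\Lambda\by=0$ with at most $2J\le K$ columns, proves by induction on $K$ that the associated $(2^K-1)\times K$ matrix of multilinear monomials has full column rank, and then runs a three-case argument on how many atoms $Q$ and $R$ share to force $A=1$. Your generating-function trick is substantially slicker: once $P_Q(\boldsymbol t)-A\,P_R(\boldsymbol t)=1-A$ is in hand, a single evaluation (one list per atom to kill its factor) collapses the left side to $0$, with no induction or rank computation; your identity $\pi_{Q,0}-A\pi_{R,0}=1-A$ (hence $\pi_{Q,0}=\pi_{R,0}\Leftrightarrow A=1$, since $\pi_{R,0}<1$) also bypasses the paper's case split. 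For necessity ($2J>K$), the paper's counterexample is different in form: it places $Q$ at the diagonal points $\alpha\cdot 2j$ and $R$ at $\alpha\cdot(2j-1)$ with binomial weights $\binom{2J}{\cdot}$, then verifies $\bmm_Q=A\bmm_R$ by recognizing the moment equations as derivatives at $0$ of $(1-e^{\alpha x})^{2J}$, which vanish to order $2J-1\ge K$; this gives an explicit $A=2^{2J-1}/(2^{2J-1}-1)$ and exactly $J$ atoms in each of $Q,R$. Your divided-difference construction with $K+1$ nodes achieves the same end via the standard fact that the $K$th divided difference annihilates $x^\ell$ for $\ell<K$, and your sign-alternation argument together with $K+1\le 2J$ handles the $J$-atom bound; the nonvanishing of $\sum_i w_i$ is just the divided difference of $1/x$, namely $(-1)^K/\prod_i x_i$. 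Both counterexamples live on the diagonal and are thus $M_h$-type, matching the limitation the paper itself flags.
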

\begin{proof}
	We will first show that if $2J\leq K$, then $\PS_{\Omega_{\QQ_J}}$ is conditionally identifiable. The proof of this direction is similar in spirit to the proofs of Theorem 2 in \cite{Holzmann_2006} and Theorem 1 in \cite{Pezzott_2019}, which were both concerned with characterizing the identifiability of the $M_h$ analogue of $\PS_{\Omega_{\QQ_J}}$. Assume $2J\leq K$, and let $Q,R\in\QQ_J$ such that $\bmm_Q=A \bmm_R$ for some $A>0$, so that we have the following system of equations:
	\begin{equation}
		\label{eq:firstsyst}
		\sum_{j=1}^J\nu_{Q,j}\prod_{k=1}^Kq_{Q,jk}^{h_k} - A
		\sum_{j=1}^J\nu_{R,j}\prod_{k=1}^Kq_{R,jk}^{h_k}=0 \quad (\bh\in H^*).
	\end{equation}
	Let $\I_Q=\{j\mid q_{Q,j}\not\in(q_{R,1},\ldots, q_{R,J})\}$ and $\I_R=\{j\mid q_{R,j}\not\in(q_{Q,1},\ldots, q_{Q,J})\}$, where $q_{Q,j}=(q_{Q,j1},\ldots, q_{Q,jK})$ and $q_{R,j}=(q_{R,j1},\ldots, q_{R,jK})$. We can then rewrite \eqref{eq:firstsyst} as 
	\begin{equation}
		\label{eq:secondsyst}
		\sum_{j=1}^Jy_j\prod_{k=1}^Kq_{Q,jk}^{h_k} - A
		\sum_{i\in \I_R}^J\nu_{R,j}\prod_{k=1}^Kq_{R,jk}^{h_k}=0 \quad (\bh\in H^*),
	\end{equation}
	where $y_j=\nu_{Q,j}$ if $j\in\I_Q$ and $y_j=\nu_{Q,j}-A\nu_{R,j'}$ for some $j'\in\{1,\ldots, J\}\setminus\I_R$ otherwise. Letting $m=|\I_R|=|\I_Q|$ and labelling the elements of $\I_R$ as $i_1,\ldots,i_m$, the system of equations in \eqref{eq:secondsyst} can be written in matrix form as $\Lambda \by=0$, where
	\begin{equation*}
		\Lambda = 
		\begin{pmatrix}
			q_{Q,1K} &  \cdots & q_{Q,JK}&q_{R,i_1K}&\cdots&q_{R,i_mK} \\
			\vdots  & \ddots& \vdots&\vdots  & \ddots& \vdots \\
			\prod_{k=1}^Kq_{Q,1k}^{h_k}&\cdots& \prod_{k=1}^Kq_{Q,Jk}^{h_k}&\prod_{k=1}^Kq_{R,i_1k}^{h_k}&\cdots& \prod_{k=1}^Kq_{R,i_mk}^{h_k} \\
			\vdots& \ddots & \vdots&\vdots  & \ddots& \vdots\\
			\prod_{k=1}^Kq_{Q,1k} &\cdots & \prod_{k=1}^Kq_{Q,Jk}&\prod_{k=1}^Kq_{R,i_1k}&\cdots&\prod_{k=1}^Kq_{R,i_mk}
		\end{pmatrix}, \quad
		\by = 
		\begin{pmatrix}
			y_1  \\
			\vdots  \\
			y_J\\
			-A\nu_{R,i_1}\\
			\vdots\\
			-A\nu_{R,i_m}
		\end{pmatrix},
	\end{equation*}
	and the rows of $\Lambda$ are indexed by $\bh\in H^*$. In Section \ref{sec:lemmaproof}, we prove that $\Lambda$ is full rank, and thus $\by=0$, for any $m \in\{0,\ldots, J\}$. The proof of this direction concludes by examining three possible cases.
	
	\begin{case}
		Suppose $m=0$, i.e. for each $j\in\{1,\ldots, J\}$, there exists some $j'\in\{1,\ldots, J\}$ such that $q_{Q,j}=q_{R,j'}$ and $\nu_{Q,j}=A\nu_{R,j'}$. As $\sum_{j=1}^J \nu_{Q,j}=\sum_{j=1}^J \nu_{R,j}=1$, this implies that $A=1$ and thus $\pi_{Q,0}=\pi_{R,0}$.
	\end{case}
	\begin{case}
		Suppose $m\in\{1,\ldots, J-1\}$, i.e. for each $j\in\{1,\ldots, J\}\setminus\I_Q$, there exists some $j'\in \{1,\ldots, J\}\setminus\I_R$ such that $q_{Q,j}=q_{R,j'}$ and $\nu_{Q,j}=A\nu_{R,j'}$. Further, for each $j\in\I_Q$ and $j'\in \I_R$ $\nu_{Q,j}=\nu_{R,j'}=0$. We can thus ignore the classes $j\in\I_Q$ and $j'\in \I_R$. As $\sum_{j=1}^J \nu_{Q,j}=\sum_{j=1}^J \nu_{R,j}=1$, this implies that $A=1$ and thus $\pi_{Q,0}=\pi_{R,0}$.
	\end{case}
	\begin{case}
		Suppose $m=J$, i.e. for each $j\in\{1,\ldots, J\}$, there exists no $j'\in \{1,\ldots, J\}$ such that $q_{Q,j}=q_{R,j'}$. Then $\nu_{Q,j}=\nu_{R,j}=0$ for $j\in\{1,\ldots, J\}$, which is a contradiction.
	\end{case}
	
	We will now show that if $2J> K$, then $\PS_{\Omega_{\QQ_J}}$ is not conditionally identifiable. To do so we will provide explicit $Q,R\in\QQ_J$ such that $\pi_{Q,0}\neq\pi_{R,0}$, but $\bmm_Q=A \bmm_R$ for $A>0$. This counterexample is modified from \cite{Tahmasebi_2018}, who studied identifiability of families of LCMs outside of the multiple-systems estimation context where $n_{0}$ is observed. Choose $J$ such that $2J> K$. For $j\in\{1,\ldots, J\}$, let $\nu_{Q,j}=\binom{2J}{2j}/(2^{2J-1}-1)$ and $\nu_{R,j}=\binom{2J}{ 2j-1}/(2^{2J-1})$. For $j\in\{1,\ldots, J\}$ and $k\in\{1,\ldots, K\}$, let $q_{Q,jk}=\alpha(2j)$ and $q_{R,jk}=\alpha(2j - 1)$ where $0<\alpha <1 / (2J)$. We thus have that $Q,R\in\QQ_J$, where clearly $Q\neq R$. In Section \ref{sec:lemmaproof2} we prove that for these choices of $Q,R$, $\bmm_Q=A \bmm_R$ for $A>0$ such that $A\neq 1$, and thus $\pi_{Q,0}\neq\pi_{R,0}$.
\end{proof}

\subsection{Proof that $\Lambda$ is Full Rank}
\label{sec:lemmaproof}
We will prove that $\Lambda$ is full rank for any $m\in\{0,\ldots, J\}$ by proving a stronger result. Recall that $K\geq 2$ and let $x_{\ell k}\in(0,1)$ for $\ell\in\{1,\ldots, K\}$ and $k\in\{1,\ldots, K\}$, such that $x_{\ell k}\neq x_{\ell k'}$ for $k\neq k'$. Let 
\begin{equation*}
	X^K = 
	\begin{pmatrix}
		x_{1K} &  \cdots & x_{KK} \\
		\vdots  & \ddots& \vdots \\
		\prod_{k=1}^Kx_{1k}^{h_k}&\cdots& \prod_{k=1}^Kx_{Kk}^{h_k}\\
		\vdots& \ddots & \vdots\\
		\prod_{k=1}^Kx_{1k} &\cdots & \prod_{k=1}^Kx_{Kk}
	\end{pmatrix},
\end{equation*}
where the rows of $X^K$ are indexed by $\bh\in H^*$. We will show that $X^K$ is full rank by induction on $K$. This implies that $\Lambda$ is full rank, as $J+m\leq 2J\leq K$ by assumption for any $m\in\{0,\ldots, J\}$.

For the base case when $K=2$, verifying $X^2$ is full rank is straightforward.
%\begin{equation*}
%X^2 = 
%\begin{pmatrix}
%x_{1,2} &  x_{2,2}  \\
%x_{1,1} &  x_{2,1} \\
%x_{1,1}\ x_{1,2}  & x_{2,1}\ x_{2,2} .
%\end{pmatrix}.
%\end{equation*}
%Suppose $\bv\in\R^{2\times 1}$ such that $X^2\bv=0$. Then $v_1x_{1,1}+v_2x_{2,1}=0$ implies that $v_1x_{1,1}=-v_2x_{2,1}$. Plugging this into $v_1x_{1,1} x_{1,2}  + v_2x_{2,1} x_{2,2}=0$ we find that $v_2x_{2,2}(x_{2,1}-x_{1,1})=0$ and thus $v_2=0$ and $v_1=0$. 
Assume that $X^{K-1}$ is full rank. Let $\bv\in\R^{K\times 1}$ be such that $X^K\bv=0$. For each $\bh\in \{\bh'\in H^*\mid h_K'=0\}$ we have that $v_K\prod_{k=1}^{K-1}x_{Kk}^{h_k}=-\sum_{\ell=1}^{K-1}v_{\ell}\prod_{k=1}^{K-1}x_{\ell k}^{h_k}$, which implies that $\sum_{\ell=1}^{K-1}v_{\ell}(x_{\ell K}-x_{KK})\prod_{k=1}^{K-1}x_{\ell k}^{h_k}=0$. For $\ell\in\{1,\ldots, K-1\}$, let $v_{\ell}'=v_{\ell}(x_{\ell K}-x_{KK})$ and $\bv'=(v_{1}',\ldots, v_{K-1}')$. This leads to the system of equations $X^{K-1}\bv'=0$. By the inductive assumption, $\bv'=0$. Since $x_{\ell K}\neq x_{KK}$ for  $\ell\in\{1,\ldots, K-1\}$, we have that $v_{\ell}=0$ for $\ell\in\{1,\ldots, K-1\}$, and thus $v_K=0$. 

\subsection{Proof of Counterexample}
\label{sec:lemmaproof2}
We will now prove that $m_{Q,\bh}=A m_{R,\bh}$ for all $\bh\in H^*$, where $A=(2^{2J-1})/(2^{2J-1}-1) \neq 1$. Define the function $h(x)=(1-e^{\alpha x})^{2J}=\sum_{i=0}^{2J}\binom{2J}{i}(-1)^ie^{\alpha i x}$. For $t\in\{1,\ldots, K\}$, we can differentiate the series representation of $h$ to find that $h^{(t)}(x)= \sum_{i=0}^{2J}\binom{2J}{i}(-1)^i(\alpha i)^t e^{\alpha i x}$ and thus $h^{(t)}(x)\vert_{x=0}= \sum_{i=0}^{2J}\binom{2J}{ i}(-1)^i(\alpha i)^t=\sum_{i=1}^{2J}\binom{2J}{ i}(-1)^i(\alpha i)^t.$ We can alternatively differentiate the non-series representation of $h$ using the fact that $t\leq K <2J$ and the chain rule for higher order derivatives to find that $h^{(t)}(x)\vert_{x=0}=0$. Let $\bh\in H^*$ and $t=\sum_{k=1}^K h_k\in\{1,\ldots, K\}$. The desired result follows as
\begin{align*}
	m_{Q,\bh}-A m_{R,\bh}&=
	\sum_{j=1}^J\nu_{Q, j} \prod_{k=1}^Kq_{Q,jk}^{h_k}-A\sum_{j=1}^J\nu_{R, j} \prod_{k=1}^Kq_{R,jk}^{h_k} \\
	&=\sum_{j=1}^J \binom{2J}{2j}(2^{2J-1}-1)^{-1}
	\prod_{k=1}^K\{\alpha(2j)\}^{h_k}- A\sum_{j=1}^J
	\binom{2J}{2j-1}(2^{2J-1})^{-1}
	\prod_{k=1}^K \{\alpha(2j - 1)\}^{h_k}\\
	&=(2^{2J-1}-1)^{-1}\sum_{i=1}^{2J} \binom{2J}{ i} (-1)^i(\alpha i)^t=(2^{2J-1}-1)^{-1}\{h^{(t)}(x)\vert_{x=0}\}=0. 
\end{align*} 

\subsection{Limitations of Theorem \ref{thm:lcmthm}}
\label{sec:limits}
Theorem \ref{thm:lcmthm} shows that $\PS_{\Omega_{\QQ_J}}$ is not conditionally identifiable if $2J>K$ by counterexample, by demonstrating two mixing distributions $Q,R\in \QQ_J$ where $\tilde{\bpi}_{Q}=\tilde{\bpi}_{R}$ but $\pi_{Q,0}\neq\pi_{R,0}$. Within each latent class of $Q$ and $R$, the sampling probabilities were the same, meaning $Q$ and $R$ can be seen as mixing distributions of an $M_h$ model. It would be interesting in future work to see whether further restrictions on $\Omega_{\QQ_J}$, for example restrictions not allowing the sampling probabilities within latent classes to be equal, lead to different results concerning conditional identifiability. Another interesting route would be to see whether results concerning \textit{generic identifiability} of latent class models \citep{Allman_2009} could be applied to the multiple-systems estimation setting.

However, this does not mean Theorem \ref{thm:lcmthm} is not a practically useful result. Theorem \ref{thm:lcmthm} provides assumptions under which which we have formal statistical guarantees for when we can estimate the parameters in $\PS_{\Omega_{\QQ_J}}$: the parameters of $\PS_{\Omega_{\QQ_J}}$ can be consistently estimated if $2J\leq K$. When $2J> K$ we currently have no such guarantees. In Web Appendix \ref{sec:lcm_sims} we demonstrate this reality across a variety of simulation studies. 

\section{Web Appendix B: Computation for Conditionally Identified Models}
The purpose of this appendix is to provide details of how computation for conditionally identified models can be carried out in both frequentist and Bayesian frameworks using existing software. Recall from Sections 2.3 and 2.4 of the main text that the complete-data distribution can be written as 
\begin{equation}
	p(\bn, n_0\mid  N, \bpi)= N!\prod_{\bh\in H}\frac{\pi_{\bh}^{n_{\bh}}}{n_{\bh}!}= L_1(N, \pi_0\mid n) L_2(\tilde{\bpi}\mid \bn),
\end{equation}
with $L_1(N, \pi_0\mid n)=\binom{N}{n} \pi_0^{N-n} (1-\pi_0)^{n}$ 
and $L_2(\tilde{\bpi}\mid \bn)= n!\prod_{\bh\in H^*}\tilde{\pi}_{\bh}^{n_{\bh}}/n_{\bh}!$, and that conditionally identified models have parameter spaces of the form $\Omega=\{N, \pi_0, \tilde{\bpi}\mid N\in \N, \pi_0=\T(\tilde{\bpi}), \tilde{\bpi}\in \tilde{S}
\}$. 

\subsection{Computation for Frequentist Multiple-Systems Estimation}
In this section we will first describe an approach for frequentist inference in general conditionally identified models, followed by the specific cases of models using the NHOI and the $K'$-list marginal NHOI identifying assumptions.

\subsubsection{Conditionally Identified Models in General}
\label{sec:freq_general}
Suppose that we are using a conditionally identified model with parameter space $\Omega=\{N, \pi_0, \tilde{\bpi}\mid N\in \N, \pi_0=\T(\tilde{\bpi}), \tilde{\bpi}\in \tilde{S}\}$. Frequentist inference for this general conditionally identified model will follow from the conditional maximum likelihood approach outlined in \cite{Sanathanan_1972} and \cite{Fienberg_1972}. In particular, this approach can be summarized in two steps:
\begin{enumerate}
	\item Estimate the observed cell probabilities $\tilde{\bpi}$ by maximizing the conditional likelihood over the set of possible observed cell probabilities $\tilde{S}$:
	\[
	\hat{\bpi} = \arg\max_{\tilde{\bpi}\in\tilde{S}}L_2(\tilde{\bpi}\mid \bn).
	\]
	\item Estimate the population size $N$ by maximizing the binomial likelihood for $n$ conditional on the estimate of the observed cell probabilities, $\hat{\bpi}$:
	\[
	\hat{N}(\hat{\bpi}) = \arg\max_{N\in\N}L_1(N, \T(\hat{\bpi})\mid n)=\left\lfloor\frac{n}{1-\T(\hat{\bpi})} \right\rfloor,
	\]
	where $\left\lfloor \cdot\right\rfloor$ is the floor function. We will ignore the rounding and write the estimator of $N$ as $\hat{N}(\hat{\bpi}) = n/\{1-\T(\hat{\bpi})\} $. This is well known as the Horvitz-Thompson estimator \citep{Horvitz_1952}.
\end{enumerate}
We note here that $\hat{\bpi}$, and thus $\hat{N}(\hat{\bpi})$, may not exist in general, depending on the set of possible observed cell probabilities $\tilde{S}$. The sample proportions $\{n_{\bh}/n\}_{\bh\in H^*}$ maximize the conditional likelihood over $\simplex^{2^K-2}$, so if the sample proportions lie in $\tilde{S}$, then they maximize the conditional likelihood over $\tilde{S}$. If the sample proportions do not lie in $\tilde{S}$, care must be taken to make sure that $\hat{\bpi}$ exists.

For the rest of this section we will assume that the model is correctly specified, and $\hat{\bpi}$ exists. Let $\tilde{\bpi}^*$ denote the true observed cell probabilities. Suppose it is true, for an estimator $\hat{\bpi}$ of $\tilde{\bpi}^*$, that 
$\sqrt{n}(\hat{\bpi}-\tilde{\bpi}^*)\mid n \to_{d}\textsc{Normal}(0, \Sigma(\tilde{\bpi}^*))$, where $\to_{d}$ denotes convergence in distribution and we are conditioning on $n$ (i.e. ignoring binomial variation in $n$). For example, when the sample proportions $\{n_{\bh}/n\}_{\bh\in H^*}$ lie within $\tilde{S}$, we have that $\hat{\bpi}=\{n_{\bh}/n\}_{\bh\in H^*}$ and $\Sigma(\tilde{\bpi}^*) = \text{diag}(\tilde{\bpi}^*)-\tilde{\bpi}^*(\tilde{\bpi}^*)^{T}$ \citep[see e.g. chapter 14 of][]{Agresti_2003}. For $\tilde{\bpi}\in\tilde{S}$, let $f(\tilde{\bpi})=1/(1-\T(\tilde{\bpi}))$. From the delta method, it follows that $\sqrt{n}(f(\hat{\bpi})-f(\tilde{\bpi}^*))\mid n \to_{d}\textsc{Normal}(0, (\nabla f(\tilde{\bpi}^*))^{T}\Sigma(\tilde{\bpi}^*)\nabla f(\tilde{\bpi}^*))$. Thus for large $n$, $nf(\hat{\bpi})= \hat{N}(\hat{\bpi})\approx\textsc{Normal}(nf(\tilde{\bpi}^*), n(\nabla f(\tilde{\bpi}^*))^{T}\Sigma(\tilde{\bpi}^*)\nabla f(\tilde{\bpi}^*))$. We can then substitute our estimate $\hat{\bpi}$ of the observed cell probabilities for $\tilde{\bpi}^*$, and use this large sample approximation to construct $95\%$ confidence intervals for $N$ of the form $\hat{N}(\hat{\bpi})\pm 1.96 * \sqrt{n(\nabla f(\hat{\bpi}))^{T}\Sigma(\hat{\bpi})\nabla f(\hat{\bpi})}$. The term $(\nabla f(\hat{\bpi}))^{T}\Sigma(\hat{\bpi})\nabla f(\hat{\bpi})$ can be calculated automatically using e.g. the \texttt{delta.method} function in the \texttt{R} package \texttt{msm} \citep{Jackson_2011}.

The confidence interval construction in the last paragraph conditions on $n$, and thus does not incorporate the binomial variation of $n$. Let $N^*$ denote the true population size. For $\tilde{\bpi}\in\tilde{S}$, let $g(\tilde{\bpi})=\T(\tilde{\bpi})/(1-\T(\tilde{\bpi}))$. Following \cite{Fienberg_1972}, unconditional of $n$ we have that $(N^*)^{-1/2}(\hat{N}(\hat{\bpi})-N^*)\to_d\textsc{Normal}(0, g(\tilde{\bpi}^*) + (1-\T(\tilde{\bpi}^*))(\nabla g(\tilde{\bpi}^*))^{T}\Sigma(\tilde{\bpi}^*)\nabla g(\tilde{\bpi}^*))$. Thus for large $N^*$, 
$\hat{N}(\hat{\bpi})\approx\textsc{Normal}(N^*, N^*g(\tilde{\bpi}^*) + N^*(1-\T(\tilde{\bpi}^*))(\nabla g(\tilde{\bpi}^*))^{T}\Sigma(\tilde{\bpi}^*)\nabla g(\tilde{\bpi}^*))$. We can then substitute our estimate $\hat{\bpi}$ of the observed cell probabilities for $\tilde{\bpi}^*$ and our estimate $\hat{N}(\hat{\bpi})$ of the population size for $N^*$, and use this large sample approximation to construct $95\%$  confidence intervals for $N$ of the form $\hat{N}(\hat{\bpi})\pm 1.96 * \sqrt{\hat{N}(\hat{\bpi}) g(\hat{\bpi}) + n(\nabla g(\hat{\bpi}))^{T}\Sigma(\hat{\bpi})\nabla g(\hat{\bpi})}$. Again, the term $(\nabla g(\hat{\bpi}))^{T}\Sigma(\hat{\bpi})\nabla g(\hat{\bpi})$ can be calculated automatically using e.g. the \texttt{delta.method} function in the \texttt{R} package \texttt{msm} \citep{Jackson_2011}.

\subsubsection{Computation for the NHOI and $K'$-List Marginal NHOI Identifying Assumptions}
In this section we will focus on frequentist inference in the specific cases of models using the NHOI and the $K'$-list marginal NHOI identifying assumptions. While one could construct estimators and confidence intervals for $N$, under these assumptions, by hand using the results from the previous section, software is already available which accomplishes these tasks. 

\noindent\textit{NHOI Identifying Assumption}\\
For the NHOI identifying assumption, there are many \texttt{R} packages which produce estimates and confidence intervals for the population size under this assumption. For example, in our Kosovo application we use the \texttt{Rcapture} package \cite{Baillargeon_2007}. The function \texttt{closedpMS.t} produces estimates and standard errors for the population size under all hierarchical log-linear models, including the saturated log-linear model $\PS_{\Omega_{LL}}$. These can then be used to construct confidence intervals for the population size.

\noindent\textit{$K'$-List Marginal NHOI Identifying Assumption}\\
Recall from Section 4.3 of the main text that the $K'$-list marginal NHOI identifying assumption restricts the observed cell probabilities to lie in $\tilde{S}=\{\tilde{\bpi}\in\simplex^{2^K-2}\mid \tilde{\Pi}_{odd, +}/(\tilde{\Pi}_{even, +}\tilde{\pi}_{0+})>1\}$. Thus there are two cases to consider when fitting a model in the frequentist framework using the $K'$-list marginal NHOI identifying assumption:
\begin{enumerate}
	\item The sample proportions $\{n_{\bh}/n\}_{\bh\in H^*}$ lie within $\tilde{S}=\{\tilde{\bpi}\in\simplex^{2^K-2}\mid \tilde{\Pi}_{odd, +}/(\tilde{\Pi}_{even, +}\tilde{\pi}_{0+})>1\}$. 
	\item The sample proportions $\{n_{\bh}/n\}_{\bh\in H^*}$ do not lie within $\tilde{S}$. 
\end{enumerate}
There is a simple way to verify for a given data set, which case one is in. Consider the  restricted data set from just the first $K'$ lists. In particular, using notation from Section 4.3 of the main text, $\{n_{\bg}^{\dagger}\}_{\bg\in G^*}$ is the restricted data, where $n_{\bg}^{\dagger} = \sum_{\bh \in H^*} n_{\bh} I\{(h_1,\cdots,h_{K'})=\bg\}$, and the restricted sample size is $n^{\dagger}=\sum_{\bg\in G^*} n_{\bg}^{\dagger}$. Using this restricted data set of $K'$ lists, one could compute the frequentist population size estimator under the NHOI assumption (for $K'$ lists), using standard software (e.g., the \texttt{Rcapture} package as just described). Call this estimate $\hat{N}^{\dagger}$. Then the sample proportions $\{n_{\bh}/n\}_{\bh\in H^*}$ lie within $\tilde{S}$ as long as $\hat{N}^{\dagger}>n$.

Suppose we are in the second case, i.e.  the sample proportions $\{n_{\bh}/n\}_{\bh\in H^*}$ do not lie in $\tilde{S}$. In this case, $\hat{\bpi}$ may not exist. One needs to verify that $\hat{\bpi}$ exists, and if it does, compute it and derive its asymptotic distribution to compute confidence intervals for $N$ as described  in Appendix \ref{sec:freq_general}. This could potentially be quite difficult technically, so we recommend if one truly believes that the $K'$-list marginal NHOI identifying assumption holds in this case, that they use a Bayesian estimator as described in Appendix \ref{sec:bayes}.

Suppose now we are in the first case, i.e.  the sample proportions $\{n_{\bh}/n\}_{\bh\in H^*}$ lie in $\tilde{S}$. Then $\hat{\bpi}=\{n_{\bh}/n\}_{\bh\in H^*}$, and thus we could then follow the details at the end of Appendix \ref{sec:freq_general} to arrive at a confidence interval for $N$. However, we want to take advantage of existing software in order to compute estimates and confidence intervals for $N$. The following theorem accomplishes this task:
\begin{theorem}
	\label{theorem:equiv}
	Let $\hat{N}$ denote the population size estimator under the $K'$-list marginal NHOI identifying assumption using the full $K$ list data set. Let $\hat{N}^{\dagger}$ denote the population size estimator when restricting to data from just the first $K'$ lists and using the NHOI assumption for $K'$ lists. If the sample proportions $\{n_{\bh}/n\}_{\bh\in H^*}$ lie in $\tilde{S}=\{\tilde{\bpi}\in\simplex^{2^K-2}\mid \tilde{\Pi}_{odd, +}/(\tilde{\Pi}_{even, +}\tilde{\pi}_{0+})>1\}$, then $\hat{N}=\hat{N}^{\dagger}$. 
\end{theorem}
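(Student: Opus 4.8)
The plan is to write both estimators in their Horvitz--Thompson form $\hat N = n/\{1-\T(\hat{\bpi})\}$ and reduce the claim to an algebraic identity between the two identifying maps evaluated at the relevant empirical distributions. Because we are in the first case, the sample proportions $\{n_{\bh}/n\}_{\bh\in H^*}$ lie in $\tilde S$, so the conditional MLE is the empirical distribution $\hat{\bpi}=\{n_{\bh}/n\}_{\bh\in H^*}$; likewise, for the restricted $K'$-list problem the saturated NHOI model is nonparametric identified, so its domain is all of $\simplex^{2^{K'}-2}$ and its conditional MLE is simply the restricted empirical distribution $\{n_{\bg}^{\dagger}/n^{\dagger}\}_{\bg\in G^*}$. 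Thus I only need to track how the two closed-form identifying functions---$\T(\tilde{\bpi})=(r-\tilde{\pi}_{0+})/(1+r-\tilde{\pi}_{0+})$ with $r=\tilde{\Pi}_{odd,+}/\tilde{\Pi}_{even,+}$ from Section~\ref{sec:marg_ident}, and the $K'$-list NHOI map $\T^{\dagger}=r^{\dagger}/(1+r^{\dagger})$ from Section~\ref{sec:loglinear}---evaluate on these two empirical inputs.

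The second step is to record the bookkeeping linking the full-data marginal proportions to the restricted proportions. Since $\sum_{\bh\in H^*}n_{\bh}I\{(h_1,\dots,h_{K'})=\bg\}=n_{\bg}^{\dagger}$, evaluating the marginal quantities at $\hat{\bpi}$ gives $\tilde{\pi}_{\bg+}=n_{\bg}^{\dagger}/n$ for $\bg\in G^*$ and $\tilde{\pi}_{0+}=(n-n^{\dagger})/n$, whence $1-\tilde{\pi}_{0+}=n^{\dagger}/n$. Consequently the restricted empirical conditional probabilities satisfy $n_{\bg}^{\dagger}/n^{\dagger}=\tilde{\pi}_{\bg+}/(1-\tilde{\pi}_{0+})$; that is, the restricted input is precisely the marginal input renormalized by $1-\tilde{\pi}_{0+}$. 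I would also keep $n^{\dagger}=n(1-\tilde{\pi}_{0+})$ on hand for the final cancellation.

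The third step propagates this rescaling through the odd/even product ratio. Writing $r^{\dagger}=\prod_{\bg\in G^*}(n_{\bg}^{\dagger}/n^{\dagger})^{I_{odd}(\bg)-I_{even}(\bg)}$ and substituting the renormalization, the product factors out a power of $(1-\tilde{\pi}_{0+})$, giving $r^{\dagger}=r\,(1-\tilde{\pi}_{0+})^{-e}$ with $e=\sum_{\bg\in G^*}\{I_{odd}(\bg)-I_{even}(\bg)\}$. The crux is the combinatorial identity $e=1$: in $\{0,1\}^{K'}$ the even- and odd-weight vectors are equinumerous ($2^{K'-1}$ each), and $G^*$ deletes exactly the all-zeros vector, which is even, leaving one more odd cell than even cell. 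Hence $r^{\dagger}=r/(1-\tilde{\pi}_{0+})$. Substituting into the two maps yields $1-\T(\hat{\bpi})=1/(1+r-\tilde{\pi}_{0+})$ and $1-\T^{\dagger}=(1-\tilde{\pi}_{0+})/(1-\tilde{\pi}_{0+}+r)$, so that $\hat N=n(1+r-\tilde{\pi}_{0+})$ and $\hat N^{\dagger}=n^{\dagger}(1-\tilde{\pi}_{0+}+r)/(1-\tilde{\pi}_{0+})=n(1-\tilde{\pi}_{0+}+r)$, which coincide.

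The only genuine obstacle is the exponent computation $e=1$; everything else is rescaling and cancellation. It is worth emphasizing that it is exactly the deletion of the all-zeros even-weight cell from $G$ that forces this exponent to equal one, which is precisely what makes the factor $(1-\tilde{\pi}_{0+})$ enter $r^{\dagger}$ with power one and cancel the $n^{\dagger}=n(1-\tilde{\pi}_{0+})$ prefactor; were the exponent anything else the two estimators would disagree. I would also note in passing that the hypothesis $\{n_{\bh}/n\}\in\tilde S$ is equivalent to $r>\tilde{\pi}_{0+}$, i.e. to $\hat N^{\dagger}>n$, which both confirms the remark preceding the theorem and guarantees all denominators above are positive.
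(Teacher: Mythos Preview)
Your proof is correct and follows essentially the same route as the paper's: write both estimators in Horvitz--Thompson form, record the two bookkeeping facts $n^{\dagger}=n(1-\tilde{\pi}_{0+})$ and $\tilde{\pi}_{\bg}^{\dagger}=\tilde{\pi}_{\bg+}/(1-\tilde{\pi}_{0+})$, and reduce to the identity $(1-\tilde{\pi}_{0+})\,r^{\dagger}=r$. The only cosmetic difference is that you isolate the crux as the combinatorial count $e=\sum_{\bg\in G^*}\{I_{odd}(\bg)-I_{even}(\bg)\}=1$, whereas the paper achieves the same cancellation by multiplying numerator and denominator by $(1-\tilde{\pi}_{0+})^{2^{K'-1}}$ and distributing; your formulation makes the source of the exponent more transparent, but it is the same computation.
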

We prove Theorem \ref{theorem:equiv} in Appendix \ref{sec:equivproof}. Theorem \ref{theorem:equiv} tells us that if we want to calculate estimates and confidence intervals for the population size under the $K'$-list marginal NHOI identifying assumption, we can accomplish this by restricting the data set to $K'$ lists, and calculating estimates and confidence intervals for the population size estimate for just these $K'$ lists under the NHOI assumption for $K'$ lists. This can be accomplished using the function \texttt{closedpMS.t} in the \texttt{Rcapture} package.

\noindent\textit{Sensitivity Analyses}\\
\texttt{Rcapture} does not support sensitivity analyses that examine the impact of the NHOI or $K'$-list marginal NHOI identifying assumptions, as described in Section 4.2 and 4.3 of the main text. However, it is straightforward to use the \texttt{glm} function in \texttt{R} to perform these sensitivity analyses, which is what \texttt{Rcapture} uses under the hood. In the code accompanying this manuscript, available at \href{https://github.com/aleshing/central-role-of-identifying-assumptions}{\texttt{github.com/aleshing/central-role-of-identifying-assumptions}}, we provide a function which performs these sensitivity analyses.

\subsubsection{Proof of Theorem \ref{theorem:equiv}}
\label{sec:equivproof}
\begin{proof}
	Suppose we have data from $K$ lists, $\{n_{\bh}\}_{{\bh}\in H^*}$, with observed sample size $n$, and we are using the $K'$-list marginal NHOI assumption, for $1<K'<K$. For this proof, denote the sample proportions by $\tilde{\bpi}=\{n_{\bh}/n\}_{\bh\in H^*}$.
	
	We start by restating some notation from Section 4.3 of the main paper. Let $G=\{0,1\}^{K'}$ index the marginal $2^{K'}$ contingency table for the first $K'$ lists and $G^*=G\setminus\{0\}^{K'}$. Let $\tilde{\pi}_{\bg+}=\sum_{\bh \in H^*} \tilde{\pi}_{\bh} I\{(h_1,\cdots,h_{K'})=\bg\}$ and $\tilde{\pi}_{0+}=\sum_{\bh\in H^*}\tilde{\pi}_{\bh}I\{(h_1,\cdots,h_{K'})=(0, \cdots, 0)\}$.  The $K'$-lists marginal NHOI assumption corresponds to the explicit identifying assumption $\T(\tilde{\bpi})=(\tilde{\Pi}_{odd, +}/\tilde{\Pi}_{even, +}- \tilde{\pi}_{0+})/(1+ \tilde{\Pi}_{odd, +}/\tilde{\Pi}_{even, +}-\tilde{\pi}_{0+}),$ where $\tilde{\Pi}_{odd, +}=\prod_{\bg\in G^*}\tilde{\pi}_{\bg+}^{I_{odd}(\bg)}$ and $\tilde{\Pi}_{even, +}=\prod_{\bg\in G^*}\tilde{\pi}_{\bg+}^{I_{even}(\bg)}$. $\T(\tilde{\bpi})\in(0,1)$ since we assume that $\tilde{\Pi}_{odd, +}/(\tilde{\Pi}_{even, +}\tilde{\pi}_{0+})>1$. 
	
	Now we introduce some new notation. Suppose we are restricted to just the data from the first $K'$ lists. Let $\{n_{\bg}^{\dagger}\}_{\bg\in G^*}$ denote the restricted data, so that $n_{\bg}^{\dagger} = \sum_{\bh \in H^*} n_{\bh} I\{(h_1,\cdots,h_{K'})=\bg\}$, and the restricted sample size is $n^{\dagger}$. Denote the restricted sample proportions by 
	$\tilde{\bpi}^{\dagger}=\{n_{\bg}^{\dagger}/n^{\dagger}\}_{\bg\in G^*}$. Using this restricted $K'$ list data set, the NHOI assumption corresponds to the explicit identifying assumption $\T^{\dagger}(\tilde{\bpi}^{\dagger})=(\tilde{\Pi}_{odd}^{\dagger}/\tilde{\Pi}_{even}^{\dagger})/(1+ \tilde{\Pi}_{odd}^{\dagger}/\tilde{\Pi}_{even}^{\dagger}),$ where $\tilde{\Pi}_{odd }^{\dagger}=\prod_{\bg\in G^*}(\tilde{\pi}_{\bg}^{\dagger})^{I_{odd}(\bg)}$ and $\tilde{\Pi}_{even}^{\dagger}=\prod_{\bg\in G^*}(\tilde{\pi}_{\bg}^{\dagger})^{I_{even}(\bg)}$.
	
	In a frequentist framework, the population size estimate using the $K'$-list marginal NHOI assumption when the estimated observed cell probabilities are $\tilde{\bpi}$ is 
	\[
	\hat{N}=\dfrac{n}{1 - \frac{\tilde{\Pi}_{odd, +}/\tilde{\Pi}_{even, +}- \tilde{\pi}_{0+}}{1+ \tilde{\Pi}_{odd, +}/\tilde{\Pi}_{even, +}-\tilde{\pi}_{0+}}}= 
	n\left[ 1 + \tilde{\Pi}_{odd, +}/\tilde{\Pi}_{even, +}- \tilde{\pi}_{0+}\right].
	\]
	Similarly, in a frequentist framework the population size estimate using the NHOI assumption with the restricted $K'$ list data set is 
	\[
	\hat{N}^{\dagger}=\dfrac{n^{\dagger}}{1 - \frac{\tilde{\Pi}_{odd}^{\dagger}/\tilde{\Pi}_{even}^{\dagger}}{1+ \tilde{\Pi}_{odd}^{\dagger}/\tilde{\Pi}_{even}^{\dagger}}}= 
	n^{\dagger}\left[ 1 + \tilde{\Pi}_{odd}^{\dagger}/\tilde{\Pi}_{even}^{\dagger}\right].
	\]
	Our task is to prove that $\hat{N}=\hat{N}^{\dagger}$.
	
	We list here two useful facts that can be verified through simple algebra:
	\begin{enumerate}
		\item $n^{\dagger}=n-n\tilde{\pi}_{0+}=n[1-\tilde{\pi}_{0+}]$.
		\item $\tilde{\pi}_{\bg}=\tilde{\pi}_{\bg}^{\dagger}[1-\tilde{\pi}_{0+}]$.
	\end{enumerate}
	
	Using the first fact, we can rewrite $\hat{N}^{\dagger}$:
	\begin{align*}
		\hat{N}^{\dagger} &= n^{\dagger}\left[ 1 + \tilde{\Pi}_{odd}^{\dagger}/\tilde{\Pi}_{even}^{\dagger}\right]\\
		&=n[1-\tilde{\pi}_{0+}]\left[ 1 + \tilde{\Pi}_{odd}^{\dagger}/\tilde{\Pi}_{even}^{\dagger}\right]\\
		&=n\left[ 1 + (1-\tilde{\pi}_{0+})(\tilde{\Pi}_{odd}^{\dagger}/\tilde{\Pi}_{even}^{\dagger}) -\tilde{\pi}_{0+}\right].
	\end{align*}
	Thus if we can show that $(1-\tilde{\pi}_{0+})(\tilde{\Pi}_{odd}^{\dagger}/\tilde{\Pi}_{even}^{\dagger})=\tilde{\Pi}_{odd, +}/\tilde{\Pi}_{even, +}$, the proof is complete. Using the second fact, we can rewrite $(1-\tilde{\pi}_{0+})(\tilde{\Pi}_{odd}^{\dagger}/\tilde{\Pi}_{even}^{\dagger})$: 
	\begin{align*}
		(1-\tilde{\pi}_{0+})(\tilde{\Pi}_{odd}^{\dagger}/\tilde{\Pi}_{even}^{\dagger}) &= 
		(1-\tilde{\pi}_{0+})\left[\dfrac{\prod_{\bg\in G^*}(\tilde{\pi}_{\bg}^{\dagger})^{I_{odd}(\bg)}}{\prod_{\bg\in G^*}(\tilde{\pi}_{\bg}^{\dagger})^{I_{even}(\bg)}}\right]\\
		&= 
		(1-\tilde{\pi}_{0+})\left[\dfrac{1-\tilde{\pi}_{0+}}{1-\tilde{\pi}_{0+}}\right]^{2^{K'-1}}\left[\dfrac{\prod_{\bg\in G^*}(\tilde{\pi}_{\bg}^{\dagger})^{I_{odd}(\bg)}}{\prod_{\bg\in G^*}(\tilde{\pi}_{\bg}^{\dagger})^{I_{even}(\bg)}}\right]\\
		&= 
		\left[\dfrac{\prod_{\bg\in G^*}\tilde{\pi}_{\bg+}^{I_{odd}(\bg)}}{\prod_{\bg\in G^*}\tilde{\pi}_{\bg+}^{I_{even}(\bg)}}\right]=\tilde{\Pi}_{odd, +}/\tilde{\Pi}_{even, +}.
	\end{align*}
	
\end{proof}

\subsection{Computation for Bayesian Multiple-Systems Estimation}
\label{sec:bayes}
In this section we will describe a computational approach for Bayesian inference in general conditionally identified models, that allows any prior for the population size, $N$, and any prior for the observed cell probabilities, $\tilde{\bpi}$. Various sensitivity analyses are facilitated from this approach. We further give some guidance to specification of the prior for $N$.

\subsubsection{Bayesian Multiple-Systems Estimation}
\label{sec:bayesmse}
Suppose that we are using a conditionally identified model with parameter space $\Omega=\{N, \pi_0, \tilde{\bpi}\mid N\in \N, \pi_0=\T(\tilde{\bpi}), \tilde{\bpi}\in \tilde{S}\}$, and we have specified independent prior distributions for $N$ and $\tilde{\bpi}$, with densities $p(N)$ and $p(\tilde{\bpi})$. In this section, and the following two sections, we will let $p(\cdot)$ denote a density of a given random variable. The joint posterior of $N$ and $\tilde{\bpi}$ can be written as $p(N, \tilde{\bpi}\mid \bn)\propto L_1(N, \T(\tilde{\bpi})\mid n)L_2(\tilde{\bpi}\mid \bn)p(N) p(\tilde{\bpi})I(\tilde{\bpi}\in \tilde{S}).$
The marginal posteriors of $\tilde{\bpi}$ and $N$ can be written as
\begin{equation}
	\label{eq:marg_dis}
	p(\tilde{\bpi}\mid \bn)\propto p(n\mid \T(\tilde{\bpi}))L_2(\tilde{\bpi}\mid \bn)p(\tilde{\bpi})I(\tilde{\bpi}\in \tilde{S}),
\end{equation}
and $p(N\mid \bn)= \int p(N\mid n, \T(\tilde{\bpi})) p(\tilde{\bpi}\mid \bn) d\tilde{\bpi},$
where $p(n \mid \pi_0) = \sum_{N=n}^{\infty} L_1(N, \pi_0\mid n)p(N)$ and $p(N \mid n, \pi_0) = L_1(N, \pi_0\mid n) p(N) / p(n \mid \pi_0)$, with $\pi_0=\T(\tilde{\bpi})$.  As we discuss in Section \ref{sec:Nprior}, we can compute $p(n\mid \pi_0)$, and thus $p(N \mid n, \pi_0)$, analytically for common priors on $N$. If one has access to Markov chain Monte Carlo (MCMC) samples $\{\tilde{\bpi}^{[t]}\}_{t=1}^T$ from $p(\tilde{\bpi}\mid \bn)$, one can then generate MCMC samples $\{N^{[t]}\}_{t=1}^T$ from $p(N\mid \bn)$ via $N^{[t]}\sim p(N \mid n, \T(\tilde{\bpi}^{[t]}))$. Summaries of the marginal posterior of $N$ can then be calculated based on these samples.

\subsubsection{Mixing and Matching Identifying Assumptions and Priors}
\label{sec:mixmatch}

While computation as described in the previous section may seem straightforward, the marginal posterior for the observed cell probabilities, $p(\tilde{\bpi}\mid \bn)$, depends on the specific combination of priors for $\tilde{\bpi}$ and $N$ and identifying assumption $\T$. Thus we need new MCMC samples from $p(\tilde{\bpi}\mid \bn)$ for each new combination of priors and identifying assumption, which can be difficult both technically and computationally. Rather than develop new MCMC samplers for each combination, we will rely on a combination of existing software and a computationally cheap rejection sampler. 

Let $p_C(\tilde{\bpi}\mid \bn)\propto L_2(\tilde{\bpi}\mid \bn)p(\tilde{\bpi})$ denote the marginal ``posterior" for the observed cell probabilities using just the conditional likelihood $L_2$. We use the subscript $C$ (for ``$C$"onditional) to denote that it is a special density that we are introducing for computational purposes. We can then rewrite the actual marginal posterior for the observed cell probabilities \eqref{eq:marg_dis} as $p(\tilde{\bpi}\mid \bn)\propto p(n\mid \T(\tilde{\bpi}))I(\tilde{\bpi}\in \tilde{S})p_C(\tilde{\bpi}\mid \bn)$. This suggests a computationally cheap rejection sampler to generate samples from $p(\tilde{\bpi}\mid \bn)$, if we have access to MCMC samples from $p_C(\tilde{\bpi}\mid \bn)$ \citep{Smith_1992}:
\begin{enumerate}
	\item Generate $U\sim\textsc{Unif}(0,1)$ and $\tilde{\bpi}\sim p_C(\tilde{\bpi}\mid\bn)$ independently.
	\item If $U<p(n\mid \T(\tilde{\bpi}))I(\tilde{\bpi}\in \tilde{S})/\{\max_{\pi_0}p(n\mid \pi_0)\}$ accept $\tilde{\bpi}$. Else go back to $(1)$.
\end{enumerate}
Thus, for a given prior $p(\tilde{\bpi})$, if we want to perform prior sensitivity analyses for $N$ and/or sensitivity analyses probing the identifying assumption as discussed in Sections 4.2 and 4.3 of the main text, we can take a one time sample from $p_C(\tilde{\bpi}\mid \bn)$, and then reuse this sample to generate samples from $p(\tilde{\bpi} \mid \bn)$ for each combination of prior for $N$ and identifying assumption.
The approach just described is only useful if we have access to MCMC samples from $p_C(\tilde{\bpi}\mid \bn)$. The rest of this section will describe how we can generate samples from the density $p_C(\tilde{\bpi}\mid \bn)$ using existing software.

Previous work in Bayesian MSE specifies priors for $\tilde{\bpi}$ indirectly. In particular, most work specifies priors on reparametrizations of the cell probabilities $\bpi$, such as log-linear models or LCMs, which induce priors for $\bpi$, and thus for $\tilde{\bpi}$. Let $p^w(\bpi)$ denote what we will call the ``working" prior for $\bpi$, which induces the prior $p(\tilde{\bpi})$ we would like to use. We use the superscript $w$ (for ``$w$"orking) to denote that it is a special density that we are introducing for computational purposes. Consider the ``working" posterior for $\bpi$, $p^w(\bpi\mid \bn)\propto \sum_{N=n}^{\infty}p(\bn,n_0\mid N, \bpi)p^w(\bpi)/N$, obtained using the ``working" prior for $N$ of $p^w(N)\propto 1/N$. 
%Theorem $1$ of \cite{Manrique-Vallier_2014} tells us that the posterior for $\tilde{\bpi}$ under this working prior combination is equal to $p_C(\tilde{\bpi}\mid \bn)\propto L_2(\tilde{\bpi}\mid \bn, n)p(\tilde{\bpi})$.
The ``posterior" for $\tilde{\bpi}$ under this working prior combination is equal to $p_C(\tilde{\bpi}\mid \bn)\propto L_2(\tilde{\bpi}\mid \bn)p(\tilde{\bpi})$, as $p(n\mid \pi_0)\propto 1/n$ under the working prior for $N$ (see Table \ref{tab:prior_table}). Thus, given MCMC samples, $\{\bpi^{[t]}\}_{t=1}^T$, drawn from $p^w(\bpi\mid \bn)$, letting $\tilde{\pi}_{\bh}^{[t]}=\pi_{\bh}^{[t]}/(1 - \pi_{0}^{[t]})$, $\{\tilde{\bpi}^{[t]}\}_{t=1}^T$ are MCMC samples drawn from $p_C(\tilde{\bpi}\mid \bn)$.

Thus if we want to use the prior $p(\tilde{\bpi})$ induced by a working prior $p^w(\bpi)$, we can rely on a combination of existing software and a computationally cheap rejection sampler to generate draws from the posterior $p(N, \tilde{\bpi}\mid \bn)$ for any combination of prior for $N$ and identifying assumptions, as long as the software uses the prior $p^w(N)\propto 1/N$. Note that our prior for $N$ \textit{does not} have to be $p^w(N)$. This is the case for most existing software, including the \texttt{R} package \texttt{conting} \citep{Overstall_2014}, which implements a reversible-jump  MCMC sampler to target $p^w(\bpi\mid \bn)$ under a working prior $p^w(\bpi)$ induced by a prior that averages over all hierarchical log-linear models \citep{King_2001}, and the \texttt{R} package \texttt{LCMCR}, which implements a data augementation Gibbs sampler to target $p^w(\bpi\mid \bn)$ under a working prior $p^w(\bpi)$ induced by a Dirichlet process prior for LCMs \citep{Manrique-Vallier_2016}. The steps of the MCMC samplers used in these packages are model specific and we would not be able to use them if we tried to create bespoke MCMC samplers targeting the marginal posterior in \eqref{eq:marg_dis}. We note that this approach is closely related to the working prior approach of \cite{Linero_2017}, with some necessary modifications specific to MSE.

\subsubsection{Recommended Priors for the Population Size, $N$}
\label{sec:Nprior}
In Table \ref{tab:prior_table} we catalog 
$p(n \mid \pi_0) = \sum_{N=n}^{\infty} L_1(N, \pi_0\mid n)p(N)$ and $p(N \mid n, \pi_0) = L_1(N, \pi_0\mid n) p(N) / p(n \mid \pi_0)$ under Poisson, negative-binomial, and binomial priors for $N$, in addition to the class of priors $p(N)\propto (N-\ell)!/N!$,  where $\ell\in\{0, 1,2,\cdots\}$, suggested by \cite{Fienberg_1999}. This class of priors contains both the improper uniform prior, $p(N)\propto 1$, when $\ell=0$, and the improper scale prior, $p(N)\propto 1/N$, when $\ell=1$. If $p(n\mid \pi_0)$ is not available analytically, for example when $p(N)$ is beta-binomial, we recommend truncating the prior for $N$ to the range $\{1,\cdots, N_{max}\}$ where $N_{max}$ is an upper bound on the population size, %(for human populations this is always possible), 
in which case $p(n\mid \pi_0)$ can be computed numerically.

\begin{table}[h]
	\centering
	\caption{Catalog of $p(N\mid n, \pi_0)$ and $p(n\mid \pi_0)$ under common priors for $N$.}
	\label{tab:prior_table}
	\begin{tabular}{l|l|l|l}
		Prior & $p(N)$ & $p(N\mid n, \pi_0)$ & $p(n\mid \pi_0)$ \\ \hline
		$\textsc{Pois}(M)$ & $(M)^Ne^{-M}/N!$ & $n+\textsc{Pois}(\pi_0M)$ &$\textsc{Pois}((1-\pi_0)M)$  \\
		$\textsc{NB}\left(a, \frac{M}{M+a}\right)$ & $\binom{N+a-1}{N}(\frac{M}{M+a})^N(\frac{a}{M+a})^a$  & $n+\textsc{NB}\left(n+a, \frac{M\pi_0}{M+a}\right)$ &$\textsc{NB}\left(a, \frac{(1-\pi_0)M}{(1-\pi_0)M +a}\right)$  \\
		$\textsc{Bin}(M, q)$ & $\binom{M}{N}q^N(1-q)^{M-N}$  & $n+\textsc{Bin}\left(M-n, \frac{\pi_0q}{\pi_0q + 1-q}\right)$ &$\textsc{Bin}(M, (1-\pi_0)q)$  \\
		\small{\cite{Fienberg_1999}} & $\propto (N-\ell)!/N!$  &$n+\textsc{NB}(n-\ell+1, \pi_0)$  & $\propto \frac{(n-\ell)!}{n!}(1-\pi_0)^{\ell-1}$
	\end{tabular}
\end{table}

% Under the improper scale prior, $p(N)\propto 1/N$, $p(\tilde{\bpi} \mid \bn)\propto p_C(\tilde{\bpi} \mid \bn)I(\tilde{\bpi}\in \tilde{S})$, and thus the rejection sampler is only necessary if the observed-data distribution is restricted, i.e. $\tilde{S}\neq\simplex^{2^K-2}$. This prior is a common ``noninformative" prior for $N$ and has the nice property that the posterior mean of $N$ conditional on $\tilde{\bpi}$ is the Horvitz-Thompson estimator \citep{Horvitz_1952}, $n/\{1-\T(\tilde{\bpi})\}$, which is well understood in the present context \cite[see e.g.][]{Rukhin_1975}.
The improper scale prior, under which $p(\tilde{\bpi} \mid \bn)\propto p_C(\tilde{\bpi} \mid \bn)I(\tilde{\bpi}\in \tilde{S})$, is a common ``noninformative" prior for $N$ and has the nice property that the posterior mean of $N$ conditional on $\tilde{\bpi}$ is the Horvitz-Thompson estimator \citep{Horvitz_1952}, $n/\{1-\T(\tilde{\bpi})\}$, which is well understood in the present context \cite[see e.g.][]{Rukhin_1975}. Recall that the Horvitz-Thompson estimator also arose when considering frequentist inference in Appendix \ref{sec:freq_general}.
Following \cite{Link_2013}, we recommend using this prior in the absence of substantive knowledge about $N$. 

When incorporating substantive knowledge about $N$ into an informative prior for $N$ we recommend using a negative-binomial or beta-binomial prior, as we have found Poisson and binomial priors to be more informative than we would usually like to use. %believe is plausible in practice. 
For concreteness in the main text we focused on the negative-binomial prior. In Table \ref{tab:prior_table}, we use a common parameterization for the negative-binomial distribution in terms of the mean $M$ and overdispersion parameter $a$. %\citep[see e.g. Appendix D of][]{Wakefield_2013}. 
This parameterization arises from a Poisson-gamma mixture, where $N\mid \delta\sim\textsc{Poisson}(M\delta)$, $\delta\sim\textsc{Gamma}(a,a)$. As $a\to\infty$ the prior approaches a Poisson prior with mean $M$, and as $a\to0$ the prior approaches the improper scale prior.

\subsection{Regularization and Data Sparsity}
As discussed in Section 3.1 of the main text, when one uses a model that places little to no restrictions on the observed data distribution (as we advocate for in Section 2.6 of the main text), this can lead to population size estimates with large variances associated with them. This typically occurs when the data is sparse, i.e. when some cells of the observed contingency table are small (or even $0$). Data sparsity can be a problem when conducting frequentist analyses, as the standard asymptotic arguments used in Appendix \ref{sec:freq_general} to derive standard errors and confidence intervals are generally not valid. This issue is secondary to our main focus of choosing the identifying assumption, in the sense that the amount of sparsity in the data should not affect the choice of identifying assumption.

We discuss here two possible routes to reduce the variance of population size estimators. The first route is to place restrictions on the observed data distribution, as advocated for by the quote of \cite{Fienberg_1972} in Section 3.1 of the main text. This would require the restricted model to truly hold, otherwise the lower estimated variance would not be valid and the population size estimate could be arbitrarily biased. We would generally prefer not to take this route, as such restrictions are typically hard to justify in practice \citep[see e.g. ][]{Dellaportas_1999, Whitehead_2019}. Further, even if one places correct restrictions on the observed data distribution, in a frequentist analysis the standard errors and confidence intervals derived in Appendix \ref{sec:freq_general} can still be invalid when the data are sparse.

The second route is to use some form of regularization when estimating the observed cell probabilities within a model that places little to no restrictions on the observed data distribution. Regularization reduces the variances of estimates, at the cost of increasing the bias of estimates, by shrinking parameter estimates to a predetermined subset of parameter space. We now briefly discuss how regularization can be incorporated into frequentist or Bayesian analyses:
\begin{itemize}
	\item In a frequentist analysis, regularization can be incorporated through some form of penalized likelihood \citep{Good_1971}, where instead of estimating the observed cell probabilities $\tilde{\bpi}$ by maximizing the conditional likelihood as described in Appendix \ref{sec:freq_general},
	one would maximize the sum of the conditional likelihood and a penalty term
	\begin{equation}
		\label{eq:penalize}
		\hat{\bpi} = \arg\max_{\tilde{\bpi}\in\tilde{S}}L_2(\tilde{\bpi}\mid \bn) - c J(\tilde{\bpi}).
	\end{equation}
	Here $J$ is a penalty function and $c>0$ is a regularization parameter. When $c=0$ the estimate corresponds to the conditional maximum likelihood estimate, and as $c$ increases the estimate gets shrunk to some subset of $\tilde{S}$ defined by the penalty function $J$. It will typically be feasible to obtain estimates by solving \ref{eq:penalize}. However,
	deriving standard errors and confidence intervals for these estimates can be difficult, especially when the data is sparse. Nonstandard asymptotic theory may be required \citep[see e.g.][]{Nardi_2012}. 
	\item In a Bayesian analysis, regularization is inherent due to the prior distribution for $\tilde{\bpi}$. Here the prior serves a similar purpose to the penalty function $J$ in a frequentist analysis, defining the subset of $\tilde{S}$ to which estimates of $\tilde{\bpi}$ are shrunk. Note that the computational techniques in Appendix \ref{sec:bayes} are still valid even when the data are sparse.
\end{itemize}

We note that there is a common difficulty associated with regularizing estimates in a frequentist or Bayesian analysis: choosing where to shrink estimates of the observed cell probabilities, $\tilde{\bpi}$; i.e. choosing the penalty function, $J$, in a frequentist analysis or the prior in a Bayesian analysis. A fruitful direction for future research is to understand what are choices of penalty functions or priors that produce population size estimates with desirable properties when the data are sparse (e.g. good frequentist performance).

\section{Web Appendix C: Identifying Assumption Derivations}
The purpose of this appendix is to derive the identifying assumptions associated with no-highest-order interaction assumption and the $K'$-list marginal no-highest-order interaction assumption.

\subsection{Derivation for No-Highest-Order Interaction Assumption}
Recall from Section 3.1 of the main text that we have the following relationship between the cell probabilities and the highest order interaction, $\lambda_{\bone}$: $\prod_{\bh\in H}\pi_{\bh}^{I_{odd}(\bh)}/\prod_{\bh\in H}\pi_{\bh}^{I_{even}(\bh)}=\exp\{(-1)^{K+1}\lambda_{\bone}\}$, where $I_{odd}(\bh)=I(\sum_{k=1}^K h_k \text{ is odd})$ and $I_{even}(\bh)=I(\sum_{k=1}^K h_k \text{ is even})$. Suppose we fix $\lambda_{\bone}\in\R$, or equivalently $\xi=\exp\{(-1)^{K+1}\lambda_{\bone}\}\in\R^+$.  Under this assumption we have that  $\prod_{\bh\in H}\pi_{\bh}^{I_{odd}(\bh)}/\prod_{\bh\in H}\pi_{\bh}^{I_{even}(\bh)}=\xi$. Multiplying the left-hand side by $1=\left(\frac{1-\pi_0}{1-\pi_0}\right)^{2^{K-1}}$, we find that $\tilde{\Pi}_{odd}/\{[ \pi_0/(1-\pi_0) ]\tilde{\Pi}_{even}\}=\xi$, where $\tilde{\Pi}_{odd}=\prod_{\bh\in H^*}\tilde{\pi}_{\bh}^{I_{odd}(\bh)}$ and $\tilde{\Pi}_{even}=\prod_{\bh\in H^*}\tilde{\pi}_{\bh}^{I_{even}(\bh)}$. Rearranging terms and solving for $\pi_0$, we find that the assumption that $\xi$ is a fixed value corresponds to the explicit functional relationship
\begin{equation}
	\label{eq:sensident}
	\T(\tilde{\bpi})=\frac{\tilde{\Pi}_{odd}/\tilde{\Pi}_{even}}{\xi+\tilde{\Pi}_{odd}/\tilde{\Pi}_{even}}.
\end{equation}
The identifying assumption corresponding to the no-highest-order interaction assumption is recovered by setting $\lambda_{\bone}=0$, or equivalently $\xi=1$: $\T(\tilde{\bpi})=(\tilde{\Pi}_{odd}/\tilde{\Pi}_{even})/(1+\tilde{\Pi}_{odd}/\tilde{\Pi}_{even})$. The observed-data distribution is not restricted by the assumption that the highest-order interaction is fixed, and thus models that use this assumption without any extra assumptions regarding the observed cell probabilities are nonparametric identified.

\subsection{Derivation for $K'$-list Marginal No-Highest-Order Interaction Assumption}
Suppose we assume that $\prod_{\bg\in G}\pi_{\bg+}^{I_{odd}(\bg)}/\prod_{\bg\in G}\pi_{\bg+}^{I_{even}(\bg)}=\xi$, where $\xi\in\R^+$ is fixed. Multiplying the left-hand side by $1=\left(\frac{1-\pi_0}{1-\pi_0}\right)^{2^{K'-1}}$, we find that $\tilde{\Pi}_{odd,+}/\{[ \pi_0/(1-\pi_0) +\tilde{\pi}_{0+}]\tilde{\Pi}_{even,+}\}=\xi$, where $\tilde{\Pi}_{odd, +}=\prod_{\bg\in G^*}\tilde{\pi}_{\bg+}^{I_{odd}(\bg)}$ and $\tilde{\Pi}_{even, +}=\prod_{\bg\in G^*}\tilde{\pi}_{\bg+}^{I_{even}(\bg)}$. Rearranging terms and solving for $\pi_0$, we find that the assumption that $\xi$ is a fixed value corresponds to the explicit functional relationship
\begin{equation}
	\T(\tilde{\bpi})=\frac{\tilde{\Pi}_{odd, +}/\tilde{\Pi}_{even, +}- \xi\tilde{\pi}_{0+}}{\xi+ (\tilde{\Pi}_{odd, +}/\tilde{\Pi}_{even, +}-\xi\tilde{\pi}_{0+})}.
\end{equation} 
The identifying assumption corresponding to the $K'$-list marginal no-highest-order interaction assumption is recovered by setting $\xi=1$: $\T(\tilde{\bpi})=(\tilde{\Pi}_{odd, +}/\tilde{\Pi}_{even, +}- \tilde{\pi}_{0+})/(1+ \tilde{\Pi}_{odd, +}/\tilde{\Pi}_{even, +}-\tilde{\pi}_{0+}).$ 

As noted in Section 4.3 of the main text, the the $K'$-list marginal no-highest-order interaction assumption does not imply that there is no highest-order interaction for all $K$ lists, as $\prod_{\bh\in H}\pi_{\bh}^{I_{odd}(\bh)}/\prod_{\bh\in H}\pi_{\bh}^{I_{even}(\bh)}=(\tilde{\Pi}_{odd}/\tilde{\Pi}_{even})\times(\tilde{\Pi}_{odd, +}/\tilde{\Pi}_{even, +}- \tilde{\pi}_{0+})^{-1}\neq 1$ in general.

\section{Web Appendix D: Latent Class Model Simulations}
\label{sec:lcm_sims}
The purpose of this appendix is conduct simulation studies demonstrating the practical implications of Theorem \ref{thm:lcmthm}. In particular, we present a variety of simulations exploring the frequentist properties of the Bayesian LCM of \cite{Manrique-Vallier_2016}. In each example we generate $200$ data sets from the model in \eqref{eq:general_model} for a given number of lists $K$ and a fixed parameter setting of $\theta\in\Omega_{\QQ_J}$, i.e. a fixed population size $N$ and a $J$-class LCM $Q\in\QQ_J$. For all examples we will use $N\in\{2000, 10000, 100000\}$. For each simulated data set, we fit the Bayesian LCM of \cite{Manrique-Vallier_2016} as implemented in the \texttt{R} package \texttt{LCMCR}, using $J$ latent classes (i.e. the same number that generated the data) and the default prior for $\bnu$, by running the Gibbs sampler implemented in \texttt{LCMCR} for $250,000$ iterations, with the first $50,000$ tossed for burn-in. We note that \texttt{LCMCR} uses the improper scale prior for $N$, i.e. $p(N)\propto1/N$, and a flat prior for $\bq$, i.e. $q_{jk}\stackrel{i.i.d.}{\sim}\textsc{Unif}(0,1)$, which can not be changed. For each parameter setting of $\theta\in\Omega_{\QQ_J}$ we examine the frequentist performance of the posterior median, $95\%$ credible interval, and $50\%$ credible interval for estimating the unobserved cell probability, $\pi_0$, through the sample mean of the posterior medians, the sample coverage of the $95\%$ credible intervals, the sample mean of the $95\%$ credible interval widths over the 200 replications, the sample coverage of the $50\%$ credible intervals, and the sample mean of the $50\%$ credible interval widths over the 200 replications.

\subsection{Example 1}
\label{sec:ex_1}
In this example we consider data from $K=2$ lists generated from the two-class LCM $Q_{1a}$ with parameters given in Table \ref{tab:example_1_table_1}. Under $Q_{1a}$, $\tilde{\pi}_{Q_{1a},(0,1)}=0.276$, $\tilde{\pi}_{Q_{1a},(1,0)}=0.276$, $\tilde{\pi}_{Q_{1a},(1,1)}=0.448$, and $\pi_{Q_{1a},0}=0.316$. There exists another two-class LCM $Q_{1b}$, with parameters given in Table \ref{tab:example_1_table_1}, such that $\tilde{\bpi}_{Q_{1a}}=\tilde{\bpi}_{Q_{1b}}$ but $\pi_{Q_{1b},0}=0.219$. Because  $\PS_{\Omega_{\QQ_2}}$ is not conditionally identified when $K=2$, if we try to perform estimation within $\PS_{\Omega_{\QQ_2}}$, which contains the true data generating model, there is no guarantee that we can estimate well the cell probabilities and population size which generated the data. This example was constructed using the counterexample used to prove Theorem \ref{thm:lcmthm}.

\begin{table}[ht]
	\centering
	\caption{Parameters of two latent class models, $Q_{1a}$ and $Q_{1b}$ (rounded for presentation)} 
	\label{tab:example_1_table_1}
	\begin{tabular}{lllllll}
		& $\nu_{1}$ & $\nu_{2}$ & $q_{11}$ & $q_{12}$ & $q_{21}$ & $q_{22}$ \\
		\hline
		$Q_{1a}$ & 0.500       & 0.500       & 0.248         & 0.248      & 0.743       & 0.743   \\
		$Q_{1b}$ & 0.857     & 0.143     & 0.495          & 0.495          & 0.990           & 0.990           
	\end{tabular}
\end{table}

The results of the simulation using data generated using the LCM $Q_{1a}$ are presented in Table \ref{tab:example_1_table_2}. We see that the posterior median has a negative bias that does not vanish as $N$ increases. One may have thought that the posterior median might possibly be a good estimator for  $\pi_{Q_{1b},0}=0.219$ since $Q_{1a}$ and $Q_{1b}$ induce the same observed-data distribution. However, the posterior median is also negatively biased for estimating $\pi_{Q_{1b},0}$, which suggests there are other LCMs in $\QQ_2$ that induce very similar observed-data distributions to $Q_{1a}$ and $Q_{1b}$ but with different induced unobserved cell probabilities. While the $95\%$ credible interval has nominal coverage when $N=2000$, as $N$ increases, coverage decreases and is no longer nominal. The $50\%$ credible interval have essentially $0$ coverage for settings of $N$, even for $N=2000$ where the $95\%$ credible interval has nominal coverage. This suggests the $95\%$ credible interval only has nominal coverage at $N=2000$ due to wide tails of the posterior for $N$. 

\begin{table}[ht]
	\centering
	\caption{Results of the simulation study where data was generated from the two-class latent class model $Q_{1a}$. Truth is $\pi_{Q_{1a},0}=0.316$.}
	\label{tab:example_1_table_2}
	\begin{tabular}{lllllllll}
		$N$& \shortstack{Mean  \\ Posterior Median} & $95\%$ CI Coverage & \shortstack{Mean  \\ $95\%$ CI Width} & $50\%$ CI Coverage & \shortstack{Mean  \\ $50\%$ CI Width} \\
		\hline
		2000 & 0.148&0.955& 0.332&0.000&0.029 \\
		10000 &  0.146& 0.730& 0.316&0.000&0.023   \\
		100000 &  0.151& 0.265& 0.167&0.055&0.037  
	\end{tabular}
\end{table}

\subsection{Example 2}
One may object to the practicality of Example 1, as it examined a two class LCM constructed using the counterexample from the proof of Theorem \ref{thm:lcmthm}, and is thus an $M_h$ LCM. So we now consider the following example. \cite{Manrique-Vallier_2016} presented a simulation study with $K=5$ lists where data was generated from a LCM with $J=2$ classes, which we reproduce in Table \ref{tab:example_2_table_1}. The parameters of this LCM were based on a hypothetical population where a small proportion of people have a high probability of being observed, and a large proportion of people have a small probability of being observed, which is plausible in some human rights applications. 

\begin{table}[h]
	\caption{Parameters of latent class model which generated data in simulation of \protect\cite{Manrique-Vallier_2016}.}
	\label{tab:example_2_table_1}
	\begin{center}
		\begin{tabular}{ccccccc}
			\hline
			&  & \multicolumn{5}{c}{Sampling probabilities, $\bq$} \\
			\cline{3-7} Class & $\bnu$ & List 1 & List 2 & List 3 & List 4 & List 5 \\ \hline
			1 & 0.900 & 0.033 & 0.033 & 0.099 & 0.132 & 0.033 \\
			2 & 0.100 & 0.660 & 0.825 & 0.759 & 0.990 & 0.693 \\ \hline
		\end{tabular}
	\end{center}
\end{table}

Suppose we only observed lists three and four, so that we have data from $K=2$ lists generated from the two-class LCM $Q_2$ with parameters given in Table \ref{tab:example_2_table_2}. Under $Q_2$, $\pi_{Q_2,0}=0.704$. Just as in the previous example, because $\PS_{\Omega_{\QQ_2}}$ is not conditionally identified when $K=2$, if we try to perform estimation within $\PS_{\Omega_{\QQ_2}}$, which contains the true data generating model, there is no guarantee that we can estimate well the cell probabilities and population size which generated the data. The results of the simulation using data generated using the LCM $Q_{2}$ are presented in Table \ref{tab:example_2_table_3}. We see that the posterior median has a large negative bias that does not vanish as $N$ increases, while the mean $95\%$ and $50\%$ credible interval widths decrease as $N$ increases. Further, the $95\%$ and $50\%$ credible intervals have essentially $0$ coverage across all $N$.

\begin{table}[ht]
	\centering
	\caption{Parameters of latent class model $Q_2$}
	\label{tab:example_2_table_2}
	\begin{tabular}{llllll}
		$\nu_{1}$ & $\nu_{2}$ & $q_{11}$ & $q_{12}$ & $q_{21}$ & $q_{22}$ \\
		\hline
		0.900       & 0.100       & 0.099         & 0.132         & 0.759       & 0.990         
	\end{tabular}
\end{table}

\begin{table}[ht]
	\centering
	\caption{Results of the simulation study where data was generated from the two-class latent class model $Q_{2}$. Truth is $\pi_{Q_2,0}=0.704$.}
	\label{tab:example_2_table_3}
	\begin{tabular}{lllllllll}
		$N$& \shortstack{Mean  \\ Posterior Median} & $95\%$ CI Coverage & \shortstack{Mean  \\ $95\%$ CI Width} & $50\%$ CI Coverage & \shortstack{Mean  \\ $50\%$ CI Width} \\
		\hline
		2000 & 0.285&0.000 & 0.408&0.000&0.055  \\
		10000 &  0.283& 0.010& 0.401&0.000& 0.036 \\
		100000 &  0.285& 0.030& 0.256&0.000&0.035
	\end{tabular}
\end{table}

\subsection{Example 3}
In this example we present two more frequentist simulation studies based on only observing a subset of the five lists from the simulation of \cite{Manrique-Vallier_2016}. 

First suppose that we only observe lists two, three, and four from the simulation of \cite{Manrique-Vallier_2016}, so that we have data from $K=3$ lists generated from the two-class LCM $Q_{3a}$ with parameters given in Table \ref{tab:example_3_table_1}. Under $Q_{3a}$, $\pi_{Q_{3a},0}=0.681$. Because $\PS_{\Omega_{\QQ_2}}$ is not conditionally identified when $K=3$, if we try to perform estimation within $\PS_{\Omega_{\QQ_2}}$, which contains the true data generating model, there is no guarantee that we can estimate well the cell probabilities and population size which generated the data.  The results of the simulation using data generated using the LCM $Q_{3a}$ are presented in Table \ref{tab:example_3_table_2}. We see that the posterior median has a slight negative bias that becomes negligible as $N$ increases. The $95\%$ credible intervals have over-coverage across the different settings of $N$.  The $50\%$ credible intervals have nominal coverage when $N=2000$, but have over-coverage as $N$ increases. 

\begin{table}[h]
	\caption{Parameters of latent class model $Q_{3a}$}
	\label{tab:example_3_table_1}
	\begin{center}
		\begin{tabular}{ccccc}
			\hline
			&  & \multicolumn{3}{c}{Sampling probabilities, $\bq$} \\
			\cline{3-5} Class & $\bnu$ &List 2 & List 3 & List 4 \\ \hline
			1 & 0.900 &  0.033 & 0.099 & 0.132  \\
			2 & 0.100 &  0.825 & 0.759 & 0.990  \\ \hline
		\end{tabular}
	\end{center}
\end{table}

\begin{table}[ht]
	\centering
	\caption{Results of the simulation study where data was generated from the two-class latent class model $Q_{3a}$. Truth is $\pi_{Q_{3a},0}=0.681$.}
	\label{tab:example_3_table_2}
	\begin{tabular}{lllllllll}
		$N$& \shortstack{Mean  \\ Posterior Median} & $95\%$ CI Coverage & \shortstack{Mean  \\ $95\%$ CI Width} & $50\%$ CI Coverage & \shortstack{Mean  \\ $50\%$ CI Width} \\
		\hline
		2000 & 0.622&1.000& 0.339&0.510&0.120\\
		10000 &  0.667& 1.000& 0.274 &0.800&0.091 \\
		100000 &  0.682& 1.000& 0.209&0.965&0.074
	\end{tabular}
\end{table}

Suppose now we only observe lists two, three, four, and five from the simulation of \cite{Manrique-Vallier_2016}, so that we have data from $K=4$ lists generated from the two-class LCM $Q_{3b}$ with parameters given in Table \ref{tab:example_3_table_3}. Under $Q_{3b}$, $\pi_{Q_{3b},0}=0.658$. Because $\PS_{\Omega_{\QQ_2}}$ is conditionally identified when $K=4$, we know that, since $\PS_{\Omega_{\QQ_2}}$contains the true data generating model, we can consistently estimate the cell probabilities and population size which generated the data. The results of the simulation using data generated using the LCM $Q_{3a}$ are presented in Table \ref{tab:example_3_table_4}. We see that the posterior median has a negative bias that becomes negligible as $N$ increases, as expected. The $95\%$ and $50\%$ credible intervals have slight under-coverage when $N=2000$, which becomes nominal as $N$ increases. 

\begin{table}[h]
	\caption{Parameters of latent class model $Q_{3b}$}
	\label{tab:example_3_table_3}
	\begin{center}
		\begin{tabular}{cccccc}
			\hline
			&  & \multicolumn{4}{c}{Sampling probabilities, $\bq$} \\
			\cline{3-6} Class & $\bnu$ & List 2 & List 3 & List 4 & List 5 \\ \hline
			1 & 0.900 &  0.033 & 0.099 & 0.132 & 0.033 \\
			2 & 0.100 & 0.825 & 0.759 & 0.990 & 0.693 \\ \hline
		\end{tabular}
	\end{center}
\end{table}

\begin{table}[ht]
	\centering
	\caption{Results of the simulation study where data was generated from the two-class latent class model $Q_{3b}$. Truth is $\pi_{Q_{3b},0}=0.658$.}
	\label{tab:example_3_table_4}
	\begin{tabular}{lllllllll}
		$N$& \shortstack{Mean  \\ Posterior Median} & $95\%$ CI Coverage & \shortstack{Mean  \\ $95\%$ CI Width} & $50\%$ CI Coverage & \shortstack{Mean  \\ $50\%$ CI Width} \\
		\hline
		2000 & 0.631&0.915&0.190&0.445&0.065\\
		10000 &  0.653& 0.940& 0.089&0.505&0.031  \\
		100000 &  0.658& 0.955& 0.028&0.485&0.010
	\end{tabular}
\end{table}

\subsection{Example 4}
In this example we present three more frequentist simulation studies based on
adding a third class to the LCM from the simulation study of \cite{Manrique-Vallier_2016}, representing a small proportion of the population having a probability of being observed somewhere between the other two classes. The parameters of this new LCM are given in Table \ref{tab:example_4_table_1}.

\begin{table}[h]
	\caption{Parameters of latent class model which generated data in simulation of \protect\cite{Manrique-Vallier_2016}, with a third class added.}
	\label{tab:example_4_table_1}
	\begin{center}
		\begin{tabular}{ccccccc}
			\hline
			&  & \multicolumn{5}{c}{Sampling probabilities, $\bq$} \\
			\cline{3-7} Class & $\bnu$ & List 1 & List 2 & List 3 & List 4 & List 5 \\ \hline
			1 & 0.700 & 0.033 & 0.033 & 0.099 & 0.132 & 0.033 \\
			2 & 0.200 & 0.275 & 0.250 & 0.200 & 0.300 & 0.325 \\
			3 & 0.100 & 0.660 & 0.825 & 0.759 & 0.990 & 0.693 \\ \hline
		\end{tabular}
	\end{center}
\end{table}

First suppose that we only observe lists two, three, and four from the LCM in Table \ref{tab:example_4_table_1}, so that we have data from $K=3$ lists generated from the three-class LCM $Q_{4a}$ with parameters given in Table \ref{tab:example_4_table_2}. Under $Q_{4a}$, $\pi_{Q_{4a},0}=0.613$. Because $\PS_{\Omega_{\QQ_{3}}}$ is not conditionally identified when $K=3$, if we try to perform estimation within $\PS_{\Omega_{\QQ_{3}}}$, which contains the true data generating model, there is no guarantee that we can estimate well the cell probabilities and population size which generated the data.  The results of the simulation using data generated using the LCM $Q_{4a}$ are presented in Table \ref{tab:example_4_table_3}. We see that the posterior median has a negative bias that does not vanish as $N$ increases. The $95\%$ credible intervals have over-coverage across the different settings of $N$, while the $50\%$ credible intervals have under-coverage across the different settings of $N$. Similar to Example 1 in Section \ref{sec:ex_1}, this suggests the $95\%$ credible interval only has over-coverage due to wide tails of the posterior for $N$.

\begin{table}[h]
	\caption{Parameters of latent class model $Q_{4a}$}
	\label{tab:example_4_table_2}
	\begin{center}
		\begin{tabular}{ccccc}
			\hline
			&  & \multicolumn{3}{c}{Sampling probabilities, $\bq$} \\
			\cline{3-5} Class & $\bnu$ & List 1 & List 2 & List 3 \\ \hline
			1 & 0.700  & 0.033 & 0.099 & 0.132  \\
			2 & 0.200 & 0.250 & 0.200 & 0.300 \\
			3 & 0.100  & 0.825 & 0.759 & 0.990  \\ \hline
		\end{tabular}
	\end{center}
\end{table}

\begin{table}[ht]
	\centering
	\caption{Results of the simulation study where data was generated from the two-class latent class model $Q_{4a}$. Truth is $\pi_{Q_{4a},0}=0.613$.}
	\label{tab:example_4_table_3}
	\begin{tabular}{lllllllll}
		$N$& \shortstack{Mean  \\ Posterior Median} & $95\%$ CI Coverage & \shortstack{Mean  \\ $95\%$ CI Width} & $50\%$ CI Coverage & \shortstack{Mean  \\ $50\%$ CI Width} \\
		\hline
		2000 & 0.524&1.000 & 0.387&0.210&0.119 \\
		10000 &  0.537& 1.000& 0.364&0.150&0.102\\
		100000 &  0.538& 1.000& 0.323&0.175&0.096
	\end{tabular}
\end{table}

Next suppose that we only observe lists two, three, four, and five from the LCM in Table \ref{tab:example_4_table_1}, so that we have data from $K=4$ lists generated from the three-class LCM $Q_{4b}$ with parameters given in Table \ref{tab:example_3_table_4}. Under $Q_{4b}$, $\pi_{Q_{4b},0}=0.569$. Because  $\PS_{\Omega_{\QQ_{3}}}$ is not conditionally identified when $K=4$, if we try to perform estimation within $\PS_{\Omega_{\QQ_{3}}}$, which contains the true data generating model, there is no guarantee that we can estimate well the cell probabilities and population size which generated the data. The results of the simulation using data generated using the LCM $Q_{4b}$ are presented in Table \ref{tab:example_4_table_5}. We see that the posterior median has a negative bias that decreases as $N$ increases. While the $95\%$ and $50\%$ credible intervals do not have nominal coverage, coverage improves as $N$ increases (but is still far from nominal even when $N=100000$).

\begin{table}[h]
	\caption{Parameters of latent class model $Q_{4b}$}
	\label{tab:example_4_table_4}
	\begin{center}
		\begin{tabular}{cccccc}
			\hline
			&  & \multicolumn{4}{c}{Sampling probabilities, $\bq$} \\
			\cline{3-6} Class & $\bnu$ & List 1 & List 2 & List 3 & List 4 \\ \hline
			1 & 0.700  & 0.033 & 0.099 & 0.132 & 0.033 \\
			2 & 0.200 & 0.250 & 0.200 & 0.300 & 0.325 \\
			3 & 0.100  & 0.825 & 0.759 & 0.990 & 0.693 \\ \hline
		\end{tabular}
	\end{center}
\end{table}

\begin{table}[ht]
	\centering
	\caption{Results of the simulation study where data was generated from the two-class latent class model $Q_{4b}$. Truth is $\pi_{Q_{4b},0}=0.569$.}
	\label{tab:example_4_table_5}
	\begin{tabular}{lllllllll}
		$N$& \shortstack{Mean  \\ Posterior Median} & $95\%$ CI Coverage & \shortstack{Mean  \\ $95\%$ CI Width} & $50\%$ CI Coverage & \shortstack{Mean  \\ $50\%$ CI Width}  \\
		\hline
		2000 & 0.469&0.525 & 0.199&0.090&0.065 \\
		10000 &  0.509& 0.630& 0.128&0.120& 0.041 \\
		100000 &  0.519& 0.695& 0.066&0.290&0.023
	\end{tabular}
\end{table}

Next suppose that we observe all five lists from the LCM in Table \ref{tab:example_4_table_1}, so that we have data from $K=5$ lists generated from the three-class LCM which we will refer to as $Q_{4c}$. Under $Q_{4c}$, $\pi_{Q_{4c},0}=0.536$. Because $\PS_{\Omega_{\QQ_{3}}}$ is not conditionally identified when $K=5$, if we try to perform estimation within $\PS_{\Omega_{\QQ_{3}}}$, which contains the true data generating model, there is no guarantee that we can estimate well the cell probabilities and population size which generated the data. The results of the simulation using data generated using the LCM $Q_{4c}$ are presented in Table \ref{tab:example_4_table_6}. We see that the posterior median has a negative bias that decreases as $N$ increases. While the $95\%$ and $50\%$ credible intervals do not have nominal coverage, coverage improves as $N$ increases.

\begin{table}[ht]
	\centering
	\caption{Results of the simulation study where data was generated from the two-class latent class model $Q_{4c}$. Truth is $\pi_{Q_{4c},0}=0.536$.}
	\label{tab:example_4_table_6}
	\begin{tabular}{lllllllll}
		$N$& \shortstack{Mean  \\ Posterior Median} & $95\%$ CI Coverage & \shortstack{Mean  \\ $95\%$ CI Width} & $50\%$ CI Coverage & \shortstack{Mean  \\ $50\%$ CI Width} \\
		\hline
		2000 & 0.435&0.415 & 0.169&0.050&0.055 \\
		10000 &  0.500& 0.875& 0.132&0.370&0.044\\
		100000 &  0.523& 0.895& 0.053&0.490&0.018
	\end{tabular}
\end{table}

\subsection{Takeaways}
When using the model $\PS_{\Omega_{\QQ_J}}$ for multiple-systems estimation, one is relying on the assumption that the data was generated from a distribution in $\PS_{\Omega_{\QQ_J}}$. If a practitioner is comfortable with the assumption that $2J\leq K$, then we know the model is conditionally identified, and thus this assumption is a combination of an explicit identifying assumption (which is currently unknown) and possibly some restrictions on the observed-data distribution. Due to conditional identification, the practitioners  have guarantees under this assumption that they can estimate the population size, and other parameters, well if their observed sample size $n$ is large enough. However, if a practitioner is not comfortable with this assumption, and chooses to use $J>K/2$, they have no such guarantees as they are using a model that is not conditionally identified.

Through the four example simulation studies in this appendix we saw examples where models that were not conditionally identified had good frequentist performance ($Q_{3a}$, $Q_{4a}$) and bad frequentist performance ($Q_{1}$, $Q_{2}$, $Q_{4b}$, $Q_{4c}$) according to some of our simulation summary measures. The good and bad frequentist performances could have been due to 
\begin{itemize}
	\item where the prior of \cite{Manrique-Vallier_2016} places mass in the parameter space $\Omega_{\QQ_J}$ (e.g. good frequentist performance if it places enough prior mass around the true data generating parameters),
	\item whether there actually exists other LCMs in $\QQ_J$ that induce similar observed cell probabilities to the true data generating parameters but a different unobserved cell probability (e.g. good frequentist performance if other LCMs do not exist with these properties),
	\item or some combination of the two previous factors.
\end{itemize}
We currently have no way to tease apart these factors and tell when a model that is not conditionally identified will have good or bad performance. This is a problem for using these models in practice, as we have no way to tell practitioners ``under these assumptions the model will perform well". 

We believe there are two routes forward to combat this problem, if one wants to use LCMs for multiple-systems estimation. The first option is to further study technical results for conditional identification in LCMs. For example, as we discussed in Section \ref{sec:limits}, suppose we can prove under further (practically relevant) restrictions on $\QQ_J$ that $\PS_{\Omega_{\QQ_J}}$ is conditionally identified for some $J>K/2$. We would then be able to expand the range of models we could fit under which we had guarantees that we could estimate well the parameters of the model. 

The other option is to study LCMs through the framework of partial identification \citep{Tamer_2010, Gustafson_2010}, which was recently used in multiple-systems estimation by \cite{Sun_2020} for frequentist inference for partially-identified log-linear models. This would require both: 1) a better technical understanding of what parameters, or functions of parameters, of LCMs are not identified, and 2) placing substantively meaningful priors on the non-identified parameters (i.e. priors informed by substantive knowledge concerning the population of interest and how the data was collected) if a Bayesian approach is taken. Without 1), the best we can do in a Bayesian approach is to place substantively meaningful priors on all LCM parameters, i.e. on $\Omega_{\QQ_J}$. The prior for $\Omega_{\QQ_J}$ of \cite{Manrique-Vallier_2016} is based on the Dirichlet Process prior specification of \cite{Dunson_2009}, which is a prior of technical convenience. Specifying a substantively meaningful prior for $\Omega_{\QQ_J}$ would require being able to specify a prior for the class membership probabilities $\bnu$ and for the class specific observation probabilities $\bq$. It is difficult to imagine a scenario in which a practicioner would have knowledge of the population of interest and how the data was collected that could be incorporated into priors for all $J(K+1) - 1$ of the parameters ($\bnu$ and $\bq$).

While we do not believe that latent class models cannot be used for multiple-systems estimation ({see our application in Section 5 of the main text where we use the LCM prior of \cite{Manrique-Vallier_2016} to induce a prior for the observed cell probabilities $\tilde{\bpi}$}), we do believe that there needs to be further research to understand under what assumptions LCMs do and do not perform well in practice. We discuss one further area of research before concluding this section. The start of this section began by assuming that a practitioner assumed their data was generated by a distribution in  $\PS_{\Omega_{\QQ_J}}$. It is not clear to the authors how in practice one would choose a specific value of $J$. In practice, how would a practitioner choose between $\PS_{\Omega_{\QQ_J}}$ and $\PS_{\Omega_{\QQ_{J'}}}$ for $J\neq J'$? What characteristics of the population being studied and the data collection process would allow one to differentiate between these two models? Research into understanding how to elicit plausible values of $J$ would help to justify the use of the model $\PS_{\Omega_{\QQ_J}}$ in practice.

% \section{Web Appendix E: Kosovo Analysis Using No-Highest-Order Interaction Assumption}
\section{Web Appendix E: Kosovo Analysis Appendix}

This appendix serves three purposes: 1) to describe the difficulty in justifying the NHOI assumption for the Kosovo data, 2) to describe a prior sensitivity analysis for the Bayesian analyses of the Kosovo data, and 3) to describe a sensitivity analysis for the Kosovo data probing the NHOI assumption.

\subsection{The No-Highest-Order Interaction Assumption}
\label{sec:nhoi_issue}
The Kosovo data set has $K=4$ lists, which we will order (without loss of generality) so that the American Bar Association Central and East European Law Initiative (ABA) list is first, the Human Rights Watch (HRW) list is second, the Organization for Security and Cooperation in Europe (OSCE) list is third, and the list constructed from exhumation reports conducted on behalf of the International Criminal Tribunal for the Former Yugoslavia (EXH) is fourth. Let $\text{Odds}(h_1=1\mid h_2=1, h_3,h_4)=\pi_{(1,1,h_3,h_4)}/\pi_{(0,1,h_3,h_4)}$ denote the odds that an individual is observed in list 1, conditional on being observed in list 2 and the inclusion patterns $h_3, h_4$ for lists 3 and 4. For example, if $h_3=0$ and $h_4=1$, $\text{Odds}(h_1=1\mid h_2=1, h_3=0,h_4=1)$ is the odds that an individual is observed in list 1, conditional on being observed in lists 2 and 4 and not being observed in list 3. Similarly let $\text{Odds}(h_1=1\mid h_2=0, h_3,h_4)=\pi_{(1,0,h_3,h_4)}/\pi_{(0,0,h_3,h_4)}$ denote the odds that an individual is observed in list 1, conditional on not being observed in list 2 and the inclusion patterns $h_3, h_4$ for lists 3 and 4. We can then define $\text{OR}(h_3,h_4)=\text{Odds}(h_1=1\mid h_2=1, h_3,h_4)/\text{Odds}(h_1=1\mid h_2=0, h_3,h_4)$ as the odds ratio for lists 1 and 2, conditional on the inclusion patterns $h_3, h_4$ for lists 3 and 4. Following Section 4.1 of the main text, the no-highest-order interaction assumption assumes that $\text{OR}(1,0)/\text{OR}(0,0)=\text{OR}(1,1)/\text{OR}(0,1)$, i.e. the highest-order interaction for the first three lists, conditional on not being observed in list 4, $\text{OR}(1,0)/\text{OR}(0,0)$,  is equal to the highest-order interaction for the first three lists, conditional on being observed in list 4, $\text{OR}(1,1)/\text{OR}(0,1)$.

This assumption is obscure and hard to justify based on our knowledge of how the four lists were generated. As the validity of our analysis rests on this assumption being correct, we stress that we are not confident that this assumption holds, and thus we are not confident in the validity of the analysis of the Kosovo data set using the NHOI assumption.

\subsection{Prior Sensitivity Analyses}
In this section we perform prior sensitivity analyses for the Bayesian analyses of the Kosovo data from the main text. For $N$, we will consider the negative-binomial prior specification described in the main text, in addition to the improper scale prior discussed in Appendix \ref{sec:Nprior}.
For the observed cell probabilities $\tilde{\bpi}$, we will consider four prior specifications : 1) the prior induced from using the Dirichlet process prior of \cite{Manrique-Vallier_2016} for the $J$ class LCM $\Omega_{LCM,J}$, with $J=10$ and default hyperparameters, as implemented in the \texttt{R} package \texttt{LCMCR} (i.e. the prior used in the main analyses), 2) a flat Dirichlet prior, i.e. $\tilde{\bpi}\sim\textsc{Dirichlet}(1,\cdots,1)$, %under which $p_C(\tilde{\bpi}\mid \bn)$ is available in closed form, 
3) the prior induced from using $\textsc{Normal}(0, 5^2)$ priors for the log-linear parameters in the saturated log-linear model $\Omega_{LL}$, fit using the \texttt{Stan} probabilistic programming language \citep{Carpenter_2017}, and 4) the prior induced from using the Bayesian model averaging prior of \cite{King_2001} for the log-linear parameters in the saturated log-linear model $\Omega_{LL}$, with the unit information prior on log-linear parameters, as implemented in the \texttt{R} package \texttt{conting} \citep{Overstall_2014}. We note that \texttt{conting} uses an alternative log-linear parameterization based on sum to zero constraints rather than corner point constraints used in Section 3.1 of the main text. For each combination of identifying assumption and priors for $N$ and $\tilde{\bpi}$ we fit the corresponding model using the computational approach described in Appendix \ref{sec:mixmatch}. 

In Table \ref{tab:prior_sensitivity_analysis_table1} we present posterior means and $95\%$ credible intervals for $N$ under each prior combination under the $2$-list marginal NHOI assumption, i.e. assuming marginal independence of the ABA and HRW lists. The posterior density for $N$ under each prior combination under the $2$-list marginal NHOI assumption is displayed in Figure \ref{fig:prior_sensitivity_analysis_plot1}. For each prior for $\tilde{\bpi}$, the posterior for $N$ does not appear to be sensitive to the prior for $N$, as the point estimates and credible intervals are essentially the same between the two priors for $N$. Across the different priors for $\tilde{\bpi}$, the posterior summaries are fairly consistent, with the posterior summaries under the LCM prior for $\tilde{\bpi}$ being slightly lower than under the other priors. We note that all of the credible intervals fall within the confidence interval of \cite{Spiegel_2000}.

\begin{table}[ht]
	\centering
	\caption{Posterior means and $95\%$ credible intervals for $N$ under each combination of prior for $N$ and $\tilde{\bpi}$, under the $2$-list marginal NHOI assumption. }
	\label{tab:prior_sensitivity_analysis_table1}
	\begin{tabular}{rll}
		\hline
		& Improper Scale Prior & Negative-Binomial \\ 
		\hline
		Conting & 9618 [8224, 11195] & 9621 [8232, 11191] \\ 
		Dirichlet & 9536 [8113, 11252] & 9540 [8123, 11247] \\ 
		LCMCR & 9353 [7959, 11063] & 9359 [7967, 11059] \\ 
		Log-Linear & 9764 [8277, 11549] & 9766 [8288, 11550] \\ 
		\hline
	\end{tabular}
\end{table}
\begin{figure}[ht]
	\centering
	\includegraphics[width=1\linewidth]{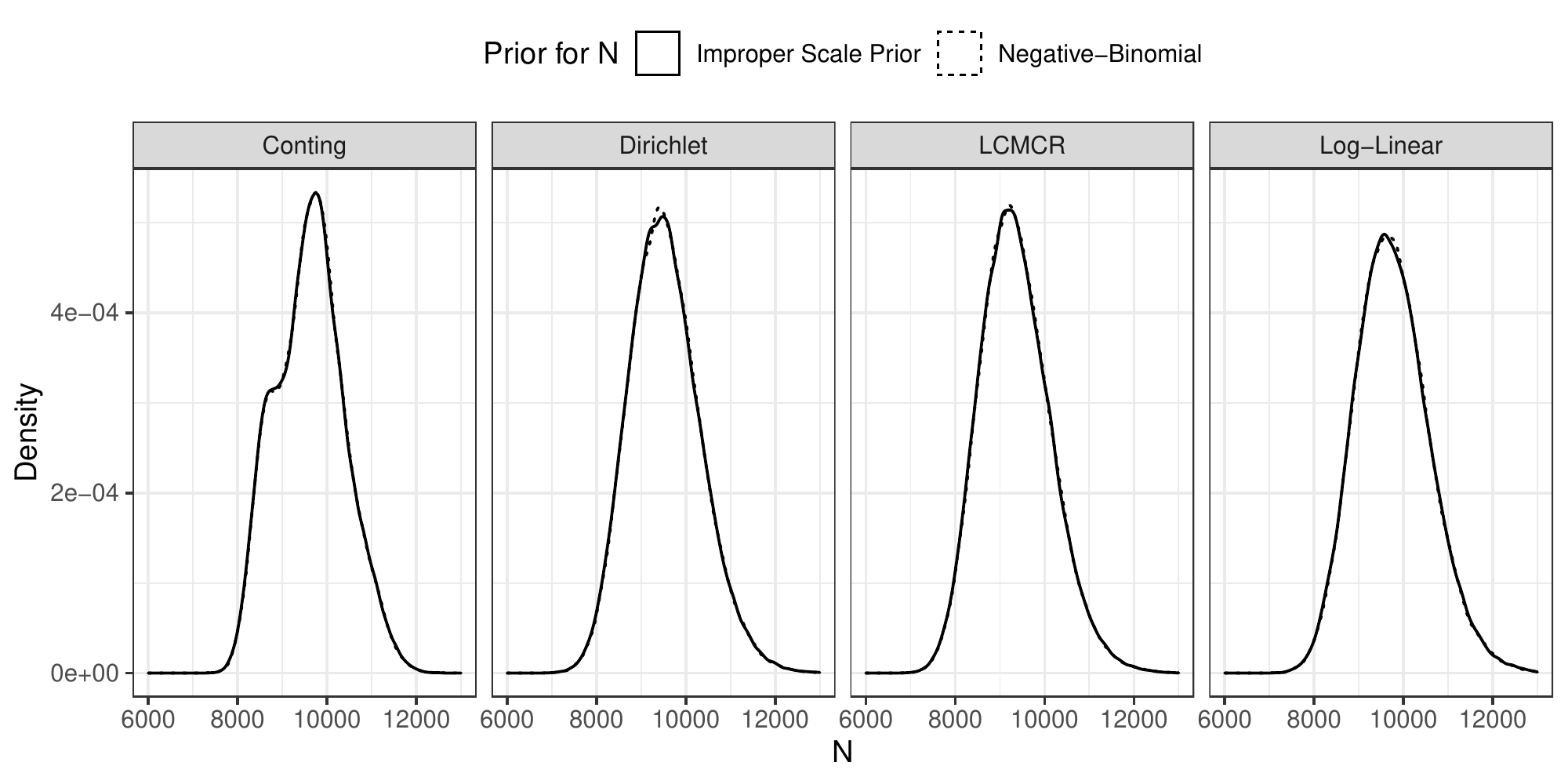}
	\begin{minipage}[b]{0.95\textwidth}
		\caption{Posterior density of $N$ under each combination of prior for $N$ and $\tilde{\bpi}$, under the $2$-list marginal NHOI assumption.}
		\label{fig:prior_sensitivity_analysis_plot1}
	\end{minipage} 
\end{figure}

In Table \ref{tab:prior_sensitivity_analysis_table2} we present posterior means and $95\%$ credible intervals for $N$ under each prior combination under the NHOI assumption. The posterior density for $N$ under each prior combination under the NHOI assumption is displayed in Figure \ref{fig:prior_sensitivity_analysis_plot2}. For each prior for $\tilde{\bpi}$, the posterior for $N$ is somewhat sensitive to the prior for $N$, as the posterior mean and credible interval limits are always larger under the improper scale prior compared to the negative-binomial prior for $N$. The posterior for $N$ appears to be the most sensitive to the prior for $N$ under the Dirichlet and log-linear priors for $\tilde{\bpi}$, where the posterior means and upper credible interval limits increase by several thousand when using the improper scale prior for $N$ instead of the negative-binomial prior. 
Across the different priors for $\tilde{\bpi}$, the posteriors corresponding to the Dirichlet prior, the log-linear model prior, and the LCM prior of \cite{Manrique-Vallier_2016} are in relative agreement. The posterior corresponding to the Dirichlet prior is the most diffuse of the three, and the posterior corresponding to the LCM prior of \cite{Manrique-Vallier_2016} is the most concentrated of the three. The posterior corresponding to the log-linear model prior of \cite{King_2001}, implemented in the \texttt{conting} package, is multimodal, which is not unexpected as it is performing Bayesian model averaging \citep{Hoeting_1999} over all hierarchical log-linear models. Due to this multimodality, point estimates (e.g. the posterior mean) may not be reliable summaries of the posterior distribution. We note that all of the credible intervals contain the point estimate of \cite{Spiegel_2000}.

\begin{table}[ht]
	\centering
	\caption{Posterior means and $95\%$ credible intervals for $N$ under each combination of prior for $N$ and $\tilde{\bpi}$, under the  NHOI assumption. }
	\label{tab:prior_sensitivity_analysis_table2}
	\begin{tabular}{rll}
		\hline
		& Improper Scale Prior & Negative-Binomial \\ 
		\hline
		Conting & 13000 [9202, 19971] & 12694 [9175, 19299] \\ 
		Dirichlet & 18500 [9402, 35908] & 16051 [9098, 27679] \\ 
		LCMCR & 14695 [9423, 23675] & 14071 [9321, 21604] \\ 
		Log-Linear & 16209 [8731, 30025] & 14719 [8579, 24878] \\ 
		\hline
	\end{tabular}
\end{table}
\begin{figure}[ht]
	\centering
	\includegraphics[width=1\linewidth]{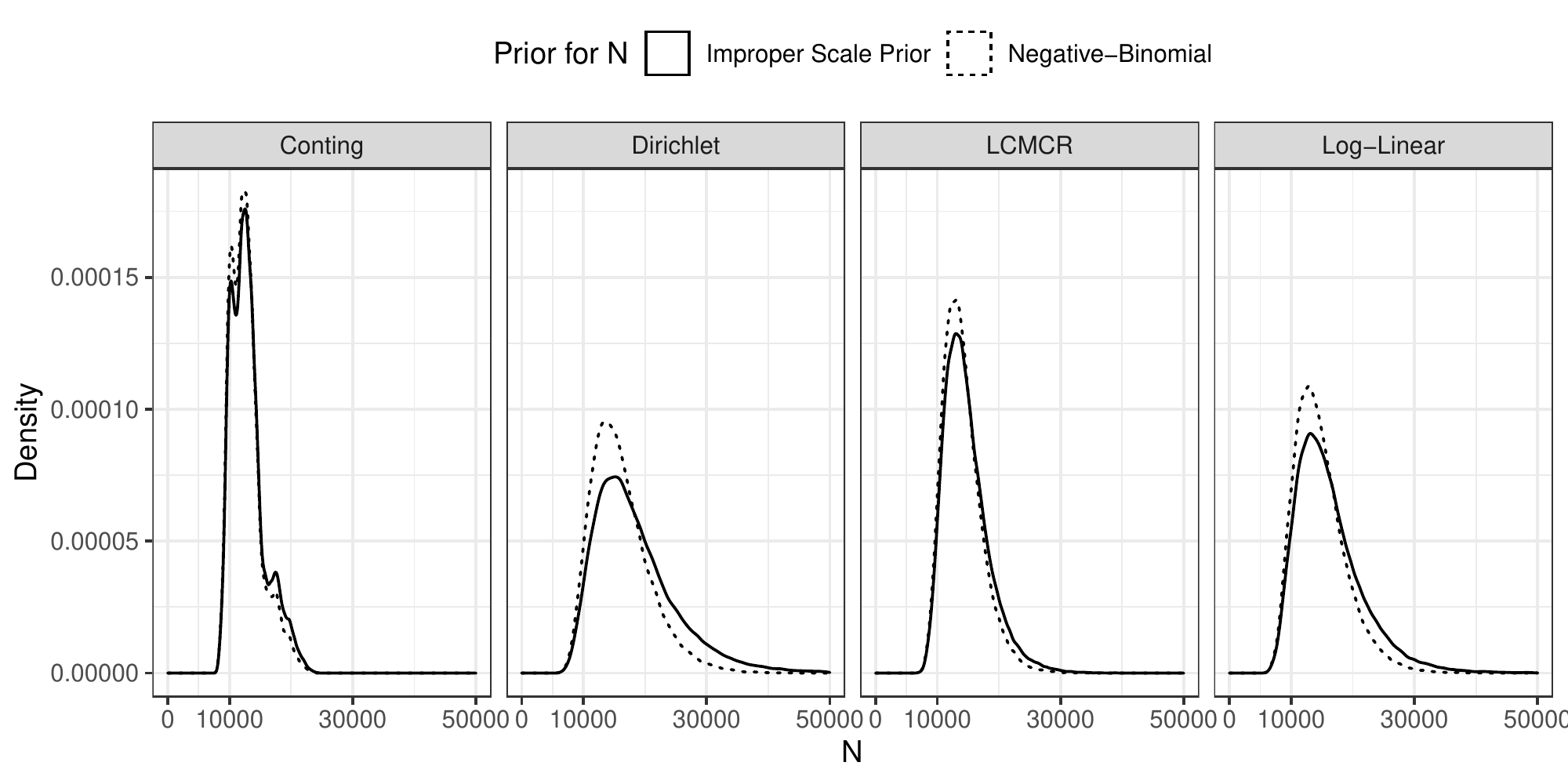}
	\begin{minipage}[b]{0.95\textwidth}
		\caption{Posterior density of $N$ under each combination of prior for $N$ and $\tilde{\bpi}$, under the  NHOI assumption.}
		\label{fig:prior_sensitivity_analysis_plot2}
	\end{minipage} 
\end{figure}

\subsection{A Sensitivity Analysis Probing the NHOI Assumption}
We now perform a sensitivity analysis probing the no-highest-order interaction assumption. We will consider models with the identifying assumption in Section 4.2 of the main text, varying $\xi$ over $\{1/2, 2/3, 1, 3/2, 2\}$ \citep[following][]{Gerritse_2015}. For each value of $\xi$, we will present both a frequentist analysis and a Bayesian analysis, with the Bayesian analysis using the same priors from the main analysis as presented in Section 5.1 of the main text. This sensitivity analysis is limited in that we followed \cite{Gerritse_2015} and chose an arbitrary range of values for $\xi$ around 1. Due to the difficulty in interpreting the highest-order interaction when there are $K=4$ lists, we are not able to say with confidence whether this range of values is meaningful or not. 
In Table \ref{tab:sensitivity_analysis_table2} we present the results from our frequentist and Bayesian analyses under each identifying assumption.

% NHOI
\begin{table}[ht]
	\centering
	\caption{Point estimates and $95\%$ uncertainty intervals for sensitivity analysis probing the NHOI assumption. For the Bayesian analysis the point estimate is the posterior mean. In this table $\xi$ is a ratio of ratios of odds ratios, as described in Section 4.2 of the main text and Appendix \ref{sec:nhoi_issue}.}
	\label{tab:sensitivity_analysis_table2}
	\begin{tabular}{rllllll}
		\hline
		&$\xi$ = 1 / 2 &$\xi$ = 2 / 3 &$\xi$ = 1 &$\xi$ = 3 / 2 &$\xi$ = 2 \\ 
		\hline
		Frequentist & 29483 [6210, 52757] & 23212 [5757, 40668] & 16941 [5304, 28579] & 12761 [5002, 20520] & 10670 [4851, 16490] \\ 
		Bayesian & 21476 [13518, 33507] & 17983 [11492, 27987] & 14071 [9321, 21604] & 11121 [7766, 16564] & 9538 [6943, 13821] \\ 
		\hline
	\end{tabular}
\end{table}

The results are not very robust to misspecification of $\xi$ in the chosen range. The uncertainty intervals when $\xi=1/2$ and $\xi=2$ barely overlap. For the Bayesian analysis, the posterior mean when $\xi=2$ is $32\%$ lower than the posterior mean when $\xi=1$ (i.e. under the no-highest-order interaction assumption), the posterior mean when $\xi=1/2$ is $53\%$ higher than the posterior mean when $\xi=1$, and the posterior mean $\xi=1/2$ is more than twice the posterior mean when $\xi=2$. These differences are even more dramatic for the frequentist analysis. This lack of robustness to misspecification of $\xi$ would be a cause for concern if the no-highest-order interaction assumption was plausible, and the deviations from the assumption in terms of $\xi$ were also plausible, in the context of the Kosovo data set. 

\bibliographystyle{biom} 
\bibliography{crc_refs_supp.bib}

\end{document}